\definecolor{myred}{RGB}{217,46,127}
\definecolor{mygreen}{RGB}{67,127,127}
\newtheorem{thm}{Theorem~}
\newtheorem{lemma}{Lemma~}
\begin{document}

\begin{frontmatter}

\title{Eliminating Quantization Errors in Classification-Based Sound Source Localization}

\author[1,2]{Linfeng Feng}
\ead{fenglinfeng@mail.nwpu.edu.cn}

\author[1,2]{Xiao-Lei Zhang\corref{cor1}}
\ead{xiaolei.zhang@nwpu.edu.cn}

\author[3]{Xuelong Li}
\ead{li@nwpu.edu.cn}

\cortext[cor1]{Corresponding author.}

\affiliation[1]{organization={School of Marine Science and Technology, Northwestern Polytechnical University},
    city={Xi'an},
    citysep={}, 
    postcode={710072}, 
    state={Shaanxi},
    country={China}}
\affiliation[2]{organization={Research \& Development Institute of Northwestern Polytechnical University in Shenzhen},
    city={Shenzhen},
    citysep={}, 
    postcode={518063}, 
    state={Guangdong},
    country={China}}
\affiliation[3]{organization={Institute of Artificial Intelligence (TeleAI), China Telecom Corp Ltd},
    addressline={31 Jinrong Street}, 
    city={Beijing},
    citysep={}, 
    postcode={100033}, 
    country={P. R. China}}
\begin{abstract}
Sound Source Localization (SSL) involves estimating the Direction of Arrival (DOA) of sound sources. Since the DOA estimation output space is continuous, regression might be more suitable for DOA, offering higher precision. However, in practice, classification often outperforms regression, exhibiting greater robustness to interference. Conversely, classification's drawback is inherent quantization error. Within the classification paradigm, the DOA output space is discretized into intervals, each treated as a class. These classes show strong inter-class correlations, being inherently ordered, with higher similarity as intervals grow closer. Nevertheless, this has not been fully exploited. To address this, we propose an Unbiased Label Distribution (ULD) to eliminate quantization error in training targets. Furthermore, we tailor two loss functions for the soft label family: Negative Log Absolute Error (NLAE) and Mean Squared Error without activation (MSE(wo)). Finally, we introduce Weighted Adjacent Decoding (WAD) to overcome quantization error during model prediction decoding. Experimental results demonstrate our approach surpasses classification quantization limits, achieving state-of-the-art performance. Our code and supplementary materials are available at \href{https://github.com/linfeng-feng/ULD}{https://github.com/linfeng-feng/ULD}.
\end{abstract}

\begin{keyword}
Sound source localization, quantization error, label distribution, loss function, decoding
\end{keyword}

\end{frontmatter}

\section{Introduction} \label{sec:introduction}
Sound Source Localization (SSL) encompasses the task of determining the spatial coordinates of sound sources. Typically, this task is simplified to estimating the Direction of Arrival (DOA) of sound sources relative to microphones\cite{grumiaux2022survey}. The obtained DOA information holds substantial value, enhancing the performance of various downstream applications. One common example is sound event localization and detection (SELD) \cite{shimada2021accdoa,shimada2022multi,bai20233d}. Accurate DOA estimates facilitate effective multichannel speaker separation \cite{wang2022localization} and can serve as a basis for ordering labels of multiple speakers during training \cite{taherian2022lbt}. Integrating DOA as an input feature in Automatic Speech Recognition (ASR) models has demonstrated a notable reduction in word error rates \cite{subramanian2021directional, subramanian2022deep}. In complex acoustic environments, speaker diarization systems leveraging DOA information have exhibited substantial improvements \cite{zheng2021real, taherian2023multi}. However, real-world SSL encounters numerous challenges, including ambient noise, reverberation, and multiple speakers.

\subsection{Motivation and challenges}
Over the past few decades, most researchers have focused on developing SSL algorithms based on traditional array signal processing techniques \cite{knapp1976generalized, schmidt1986multiple, dibiase2000high}. In recent years, SSL research has shifted towards deep learning methods, where deep neural networks (DNNs) have demonstrated considerable promise and robustness in challenging acoustic environments, e.g. high reverberation \cite{chakrabarty2019multi}. Based on the training objective of DNN, deep learning-based DOA estimation can be typically categorized into two categories: regression and classification.

One of the early DNN-based regression methods \cite{vesperini2016neural} designs an output layer with two neurons, used to estimate the coordinate of a sound source in a 2D plane. A similar method in \cite{vecchiotti2019detection} also employs a similar output structure, but with an additional neuron for Voice Activity Detection (VAD). \cite{vera2018towards, adavanne2018sound} focuses on 3D Cartesian coordinate localization, where an additional neuron is used to estimate the source height.  \cite{shimada2021accdoa, shimada2022multi} introduces the concept of Activity-coupled Cartesian DOA (ACCDOA), intertwining sound activity with DOA to form labels for the trajectory regression task in SELD. Moreover, there are also regression methods that estimate the DOA of a source in the spherical coordinate system \cite{diaz2020robust, diaz2022direction}. 

One of the early classification-based approaches \cite{xiao2015learning} is implemented through a fully connected neural network. Note that classification-based approaches have natural advantages for multi-speaker localization. Assuming the speaker count is known, \cite{chakrabarty2019multi} uses phase spectra as the input of convolutional neural networks, and takes the multiple peak probabilities of the network output for multi-speaker localization. \cite{subramanian2022deep} proposed to decouple double-speaker localization into a set of single-speaker localization problems. Some work in the literature also studied scenarios where the source count is unknown. For example, \cite{he2019adaptation} compares the predicted distributions produced by a DNN with a threshold, where classes exceeding the threshold are considered to have source activity. \cite{nguyen2020robust} constructs a DNN with two output branches, one for outputting the distributions of sound sources, and the other for outputting the number of sources. \cite{fu2022iterative} presents an iterative SSL method that extracts the DOA of each sound source iteratively without a threshold.

Experiments by \cite{tang19_interspeech} show spherical DOA regression underperforms versus classification, while Cartesian regression outperforms classification. This echoes \cite{perotin2019regression} emphasizing greater precision for Cartesian regression and robustness for classification. The discrepancy results from significant quantization errors in Cartesian classification \cite{feng2023soft}.

Quantization error refers to the localization error when one-hot classification reaches 100\% accuracy. Quantization errors not only directly impact localization accuracy but also introduce non-smoothness in labels during training, potentially confusing the model. Consider a classification resolution of 5 degrees, with two samples having ground-truth DOA values of 87.6 and 92.4, resulting in identical one-hot labels representing 90. Since their DOA difference is 4.8, this leads to low intra-class similarity. Conversely, if the ground-truth DOA values are 92.4 and 92.6, their respective labels represent 90 and 95, with a DOA difference of 0.2, indicating high inter-class similarity.

Gaussian Label Coding (GLC) \cite{he2018deep} and Soft Label Distribution (SLD) \cite{subramanian2022deep} emerge as similar soft label strategies alternative to one-hot encoding. Both assign non-zero values to multiple classes near the ground-truth, smoothing labels. GLC allows highly smooth intra-and inter-class transitions without constraining label sum to 1. However, this flexibility prohibits the use of softmax activation in the output layer. 
SLD constrains label sum to 1, aligning with probabilistic interpretation. 
Theoretical advantage over one-hot encoding lies in inter-class smoothness, while quantization errors remain consistent. It is worth noting that these methods do not completely resolve the issue of label quantization errors during the training phase. Furthermore, during the decoding phase, they still select the only peak class from the output vector, reintroducing quantization errors.

\subsection{Goals and contributions}
Based on the aforementioned analysis, we propose a novel output architecture designed for classification. This architecture not only retains robustness of classification but also incorporates the high precision of regression. Across experiments in diverse environments, from ideal to challenging, we substantiate our proposed architecture's effectiveness. The primary contributions can be summarized as follows:
\begin{itemize}
    \item
    \textbf{We introduce an Unbiased Label Distribution (ULD) aimed at eliminating quantization errors during training.} ULD's injective encoding permits unbiased inverse mapping to ground truth position. Notably, ULD exhibits smooth transitions within and between classes. Furthermore, ULD retains advantages of one-hot encoding for classification.
    
    \item
    \textbf{We propose two novel loss functions for soft labels: Negative Log Absolute Error (NLAE) and Mean Squared Error without activations in the output layer (MSE(wo)).} Cross-entropy (CE) loss may be suboptimal for soft labels because its optimization objective does not directly point to the soft label itself. We analyzed compatibility between cross-entropy-like loss functions and classification model output layer activations, and advantages of MSE loss for soft labels. After analysis, our new strategy is to combine these loss types for the soft label family.

    \item
    \textbf{We propose a Weighted Adjacent Decoding (WAD) to address shortcomings of sole reliance on peak probability.} Selecting the class corresponding to the peak as the estimated DOA, denoted as {\textit{Top-1 decoding}}, suffers from quantization errors. We incorporate sidelobes of the peak class into decoding method design, yielding WAD. This overcomes the quantization error limit of Top-1 decoding.
\end{itemize}
The remainder of this paper is organized as follows. Section~\ref{sec:ssl} outlines the classification paradigm to introduce the issues. In Sections~\ref{sec:label} to \ref{sec:decoding}, we provide a detailed description of our contributions. Sections~\ref{sec:exp} and \ref{sec:res} demonstrate the effectiveness of our proposed method through experimental results. Finally, Section~\ref{sec:conclusion} presents the conclusions of our study.

\section{Supervised sound source localization} \label{sec:ssl}
In this paper, we focus solely on azimuth sound source localization. We begin with the single-source localization problem. The DOA is measured in degrees. Assuming the maximum output range of DOA is denoted as $r$, if microphones are collinear, then $r$ is 180; if microphones are coplanar but not collinear, then $r$ is 360. The classification model discretizes the output space of DOA into several cells, with the standard cell length denoted as $l$. To ensure boundary coverage, we set $I = r/l$ with $I \in \mathbb{N}$, yielding the set of class values $\{0, 1, \ldots, I-1, I\}$, so the output space of DOA is discretized into $\{0, l, \ldots, (I-1)\cdot l, I\cdot l\}$.

Without loss of generality, we assume that $u$ denotes an utterance. A DNN-based SSL model $\mathrm{DNN}(\cdot)$ can be formulated as:
\begin{equation}\label{eq:ssl_def}
    \begin{aligned}
    \boldsymbol{\kappa} = \mathrm{DNN}(u) \\
    \boldsymbol{\hat y} = \sigma(\boldsymbol{\kappa})
    \end{aligned}
\end{equation}
where the operation $\sigma(\cdot)$ maps $\boldsymbol{\kappa} \in \mathbb{R}^{I+1}$  to a predicted distribution $\boldsymbol{\hat y} \in [0, 1]^{I+1}$.

We assume that $\boldsymbol{y} \in [0, 1]^{I+1}$ represents the label distribution of the sound source's true position $p \in [0, r]$, and the process of obtaining $\boldsymbol{y}$ can be formalized as follows:
\begin{equation}\label{eq:label_encoding}
    \boldsymbol{y} = \mathrm{Encoding}(p)
\end{equation}
where $\mathrm{Encoding}(\cdot)$ represents the mapping from $p$ to $\boldsymbol{y}$. In general, if $\sum_{i=0}^I y_i=1$, then softmax activation is suitable as a candidate of $\sigma(\cdot)$ in Eq.~\eqref{eq:ssl_def}, otherwise sigmoid activation is more appropriate.

The training objective of a DNN is to find its optimal learnable parameters that minimize the loss function $\mathcal{L}$, while producing an output $\boldsymbol{\hat y}$ that is as close as possible to the ground-truth label distribution $\boldsymbol{y}$. This can be achieved through supervised training, which can be formulated as:
\begin{equation}\label{eq:deep_metric}
    \mathrm{DNN}^* = \arg \min_{\mathrm{DNN}} \mathcal{L}(\boldsymbol{y}, \boldsymbol{\hat y})
\end{equation}

After obtaining $\boldsymbol{\hat y}$ in the test stage, it becomes feasible to map $\boldsymbol{\hat y}$ to a corresponding predicted position $\hat p \in [0, r]$, which can be referred to as a decoding process:
\begin{equation}\label{eq:decoding}
    \hat p = \mathrm{Decoding}(\boldsymbol{\hat y})
\end{equation}
where $\mathrm{Decoding}(\cdot)$ represents the mapping from $\boldsymbol{\hat y}$ to $\hat p$.

\begin{figure}[t]
    \centering
    \includegraphics[width=0.45\textwidth]{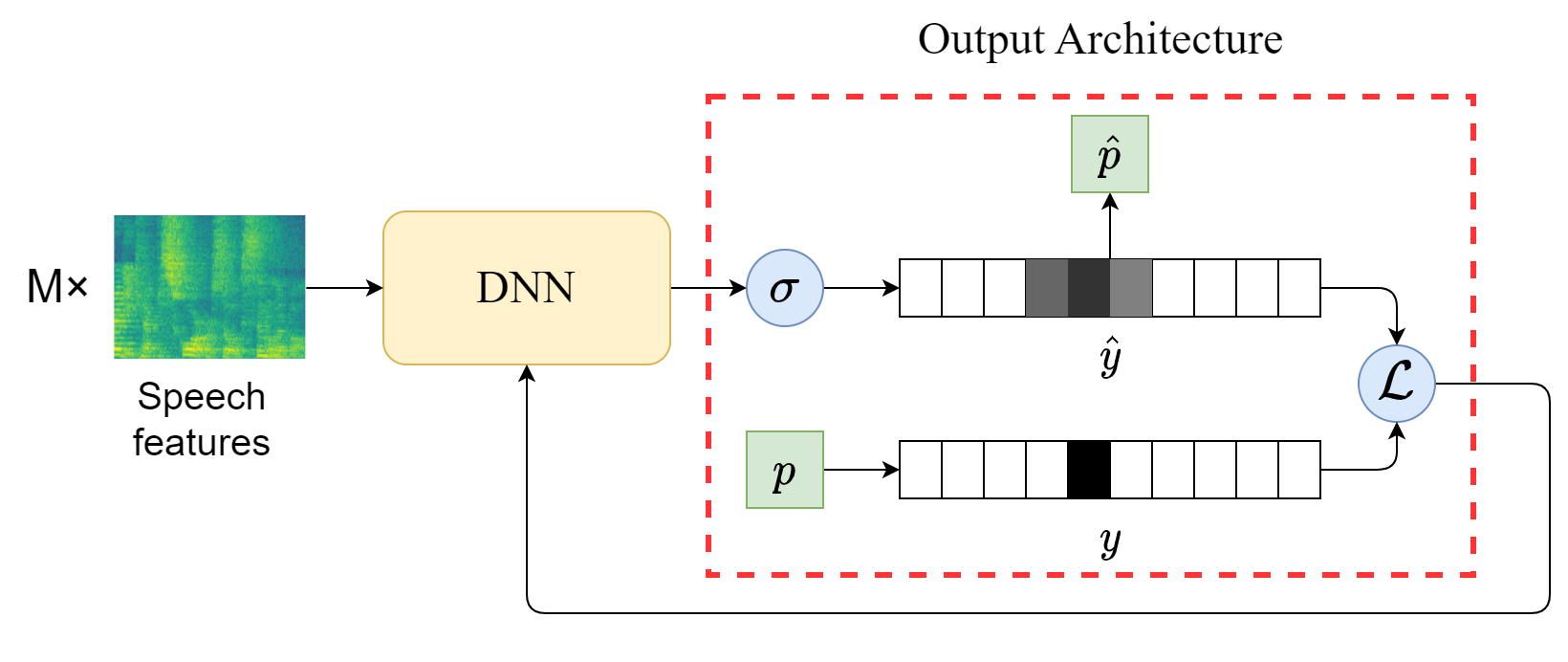}
    \caption{Workflow of a supervised sound source localization, where $M$ is the number of microphones. The output architecture is highlighted in the red box.}
    \label{fig:workflow}
\end{figure}

The workflow for supervised single-source localization can be represented by Figure \ref{fig:workflow}, where the red box represents the output architecture that is our main focus. The ultimate goal of our design is to improve the predictive accuracy of $\hat p$.

Concerning the challenge of localizing multiple sound sources, we suggest utilizing the source splitting mechanism proposed in \cite{subramanian2022deep} to decompose the problem into multiple single-source localization tasks. 

\section{Unbiased label distribution} \label{sec:label}
This section primarily concentrates on the label encoding presented in Eq.~\eqref{eq:label_encoding}, where we discuss the label distribution of a single speaker in a simple yet general manner.

\subsection{Analysis}\label{subsec:ana_ULD}
First, it should be emphasized that the true position of a sound source, $p$, is a real number, where $0 \leq p \leq r$. We introduce a scaled variable $\gamma = p / l$, where $\gamma \in \mathbb{R}$ and $0 \leq \gamma \leq I$. Typically, a common classification method is to apply an operation, $\mathrm{round}(\cdot)$, to assign $\gamma$ to its nearest integer and then encode it as a one-hot label distribution. From the perspective of probability, the scheme of assigning $p$ to the $\mathrm{round}(\gamma)$-th class can be interpreted as that, the probability of the sound source located in the $\mathrm{round}(\gamma)$-th cell is 1, while the probabilities in other cells are all 0. Formally, the one-hot label distribution is $\boldsymbol{y}^{\mathrm{1-hot}} = \{y^{\mathrm{1-hot}}_i\}_{i=0}^I$, with the code for the $i$-th class $y^{\mathrm{1-hot}}_{i}$ defined as:
\begin{equation}\label{eq:1-hot}
    y^{\mathrm{1-hot}}_{i} =
    \left\{\begin{array}{ll}
      1,& \mbox{ if   } i = \mathrm{round}(\gamma) \\
      0,& \mbox{ otherwise}
    \end{array}\right.,\quad \forall i = 0,\ldots, I
\end{equation}

From the above description, we can infer that the reason why the mapping from $p$ to the one-hot distribution encoding is not injective lies in the operation $\mathrm{round}(\cdot)$. In other words, it is inevitable to have quantization errors when using a single integer, $\mathrm{round}(\gamma) \in \mathbb{N}$, to represent a real number $p$.

\begin{thm}\label{thm:qe}
  The operation of $\mathrm{round}(\cdot)$ results in a quantization error whose mathematical expectation is $l / 4$.
\end{thm}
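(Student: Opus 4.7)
My plan is to reduce the quantization error to the absolute rounding residual of the normalized coordinate $\gamma = p/l$ introduced just above the theorem. Since the one-hot label for source position $p$ is placed at the class $\mathrm{round}(\gamma)$, the implicit quantized estimate of the true position is $\hat p_{\mathrm{q}} = \mathrm{round}(\gamma)\cdot l$, and the quantization error becomes
\[
e(p) \;=\; \bigl|\,p - \hat p_{\mathrm{q}}\,\bigr| \;=\; l\cdot\bigl|\gamma - \mathrm{round}(\gamma)\bigr|.
\]
Thus $\mathbb{E}[e] = l\cdot\mathbb{E}\bigl[|\gamma - \mathrm{round}(\gamma)|\bigr]$, and the whole statement reduces to evaluating this one expectation.

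Next I would adopt the standard (and here implicit) assumption from quantization theory that the true source position $p$ is uniformly distributed over the output range $[0,r]$, equivalently that $\gamma$ is uniform on $[0,I]$. I would then split the integral $\int_0^I |\gamma - \mathrm{round}(\gamma)|\,d\gamma$ into three pieces matching the cell structure dictated by $\mathrm{round}(\cdot)$: the two boundary half-cells $[0,1/2]$ and $[I-1/2,I]$, and the $I-1$ full cells $[k-1/2,k+1/2]$ for $k=1,\dots,I-1$. Each boundary piece contributes $\int_0^{1/2} t\,dt = 1/8$, and each full cell contributes $2\int_0^{1/2} t\,dt = 1/4$, so the total is $2\cdot 1/8 + (I-1)\cdot 1/4 = I/4$. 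Multiplying by the uniform density $1/I$ of $\gamma$ and the factor $l$ yields $\mathbb{E}[e] = l/4$.

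The only genuine subtlety, and what I would flag explicitly in the write-up, is the choice of prior on $p$: without a distributional assumption the claim is not well-defined, and the slightly weaker statement ``$p$ is uniform within each cell'' already suffices. A secondary point worth observing is that the two boundary half-cells combine cleanly into the equivalent of one full cell, so no correction term survives despite the asymmetry of rounding at the endpoints of $[0,r]$. I do not expect any real technical obstacle here; the whole argument is a one-line integral once the setup is stated, and the main presentational hurdle is simply making the uniform-prior hypothesis visible so that the bookkeeping of boundary cells is transparent.
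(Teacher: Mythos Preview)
Your proposal is correct and follows essentially the same route as the paper: reduce the quantization error to $l\cdot|\gamma-\mathrm{round}(\gamma)|$, integrate the rounding residual under a uniform prior to obtain $1/4$, and multiply by $l$. The only cosmetic difference is that the paper integrates over a single unit period $[n,n+1]$ (implicitly invoking periodicity) rather than summing your half- and full-cell contributions over $[0,I]$; your explicit flagging of the uniform-prior assumption and the boundary bookkeeping is, if anything, a little more careful than the paper's own write-up.
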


\begin{proof}
 See Appendix A for the proof.
\end{proof}

However, as $\sum_{i=0}^I y^{\mathrm{1-hot}}_{i}=1$, the one-hot distribution is consistent with probability theory interpretation, suitable for supervised models improving classification accuracy. Therefore, our motivation is finetuning this distribution addressing strengths and weaknesses.

\subsection{Definition}
\begin{thm}\label{thm:uld}
  Let $\gamma$ be a non-negative real number, with $\mathrm{int}(\gamma)$ representing its integer part, and $\mathrm{deci}(\gamma)$ denoting its decimal part. Then, for any $\gamma$ between two adjacent integers, $\mathrm{int}(\gamma)$ and $\mathrm{int}(\gamma)+1$, an unbiased approximation is given by:
\[
\gamma = (1 - \mathrm{deci}(\gamma)) \times \mathrm{int}(\gamma) + \mathrm{deci}(\gamma) \times (\mathrm{int}(\gamma) + 1)
\]
\end{thm}

\begin{proof}
 See Appendix B for the proof.
\end{proof}

Based on Theorem \ref{thm:uld}, we can easily derive our novel \textit{unbiased label distribution} $\boldsymbol{y}^{\mathrm{u}} = \{y^{\mathrm{u}}_i\}_{i=0}^I$ as follows:
\begin{equation}\label{eq:uld}
    y^{\mathrm{u}}_{i} =
    \left\{\begin{array}{ll}
      1-\mathrm{deci}(\gamma),& \mbox{ if   } i = \mathrm{int}(\gamma) \\
      \mathrm{deci}(\gamma),& \mbox{ if   } i = \mathrm{int}(\gamma)+1 \\
      0,& \mbox{ otherwise}
    \end{array}\right.,\quad \forall i = 0,\ldots, I
\end{equation}

Clearly, the sum of the elements of a ULD vector, represented as $\sum_{i=0}^I y^{\mathrm{u}}_{i}=1$, has a probabilistic interpretation, indicating the probability of a sound source appearing in the respective cell. The main advantage of ULD lies that:
\begin{thm}\label{thm:2}
ULD is free of quantization errors.
\end{thm}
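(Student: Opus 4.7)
The plan is to mirror the quantization-error calculation used in the proof of Theorem~\ref{thm:qe}, but with the one-hot distribution $\boldsymbol{y}^{\mathrm{1-hot}}$ replaced by the unbiased label distribution $\boldsymbol{y}^{\mathrm{u}}$ defined in Eq.~\eqref{eq:uld}. That earlier proof computed the error as $qe = |p - \sum_{i=0}^I y^{\mathrm{1-hot}}_i \cdot i \cdot l|$ and showed it has expectation $l/4$. The natural strategy here is to form the same quantity for ULD and show that it vanishes \emph{pointwise}, i.e.\ for every real $p\in[0,r]$, not merely in expectation. The key observation is that ULD is not an approximation of $\gamma$ by a single integer as $\mathrm{round}(\cdot)$ was, but a convex combination of the two adjacent integers $\mathrm{int}(\gamma)$ and $\mathrm{int}(\gamma)+1$, whose weights are precisely chosen so that the combination reproduces $\gamma$ exactly by Theorem~\ref{thm:uld}.

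Carrying out the plan, I would first substitute Eq.~\eqref{eq:uld} into $\sum_{i=0}^I y^{\mathrm{u}}_i \cdot i \cdot l$. Since ULD has only two nonzero entries at indices $\mathrm{int}(\gamma)$ and $\mathrm{int}(\gamma)+1$, this sum collapses to
\begin{equation*}
l\Bigl[(1-\mathrm{deci}(\gamma))\cdot\mathrm{int}(\gamma) + \mathrm{deci}(\gamma)\cdot(\mathrm{int}(\gamma)+1)\Bigr].
\end{equation*}
Next, I would invoke Theorem~\ref{thm:uld} to replace the bracketed expression by $\gamma$, yielding $l\cdot\gamma = l\cdot(p/l) = p$. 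Finally, substituting this into the quantization error formula gives $qe = |p - p| = 0$, from which the claim and hence also $\mathbb{E}(qe)=0$ follow immediately.

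There is essentially no obstacle in this proof, which is by design: ULD was constructed precisely so that the weighted integer reconstruction in Theorem~\ref{thm:uld} is exact, and Theorem~\ref{thm:2} is the direct payoff of that construction. The only subtlety worth flagging is a boundary edge case, namely when $\gamma$ is itself an integer (so $\mathrm{deci}(\gamma)=0$); then ULD degenerates to a one-hot vector at $\mathrm{int}(\gamma)=\gamma$, and the identity $\sum y^{\mathrm{u}}_i\cdot i\cdot l = p$ still holds trivially. Beyond noting this, the argument is a one-line consequence of Theorem~\ref{thm:uld}, so the proof can be kept quite short.
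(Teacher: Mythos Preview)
Your proof is correct but takes a different route from the paper. The paper's proof is a one-sentence injectivity argument: it simply observes that the encoding $p\mapsto\boldsymbol{y}^{\mathrm{u}}$ is one-to-one (distinct positions yield distinct ULD vectors), so $\boldsymbol{y}^{\mathrm{u}}$ can in principle be inverted back to $p$ exactly, and hence no quantization error is incurred. Your approach is instead a direct computation: you evaluate the concrete reconstruction $\sum_i y^{\mathrm{u}}_i\cdot i\cdot l$, invoke Theorem~\ref{thm:uld} to collapse it to $p$, and conclude $qe=0$ pointwise. Your argument is more explicit and arguably more informative, since it exhibits the actual decoder that achieves zero error (the weighted-sum formula that later motivates WAD), whereas the paper's version only asserts existence of an inverse without naming it. Both are valid; yours ties the result more tightly to Theorem~\ref{thm:uld} and to the decoding scheme of Section~\ref{sec:decoding}, while the paper's is terser.
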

\begin{proof}
It is clear that the encoding mapping from $p$ to $\boldsymbol{y}^{\mathrm{u}}$ is a one-to-one mapping, such that when $p_1 \neq p_2$, we have $\boldsymbol{y}^{\mathrm{u}}_1 \neq \boldsymbol{y}^{\mathrm{u}}_2$, enabling $\boldsymbol{y}^{\mathrm{u}}$ to be accurately inverse-mapped to $p$.
\end{proof}

\subsection{Connection between ULD and one-hot distribution}
When $\mathrm{deci}(\gamma)<0.5$, $\mathrm{int}(\gamma)=\mathrm{round}(\gamma)$, otherwise $\mathrm{int}(\gamma)+1=\mathrm{round}(\gamma)$. Therefore, ULD can be regarded as a smoothed one-hot distribution, which can fully inherit the advantages of the one-hot distribution, while avoiding the problem of disproportionate distribution similarity to the DOA distance.


\begin{figure}[t]
    \centering
    \includegraphics[width=0.45\textwidth]{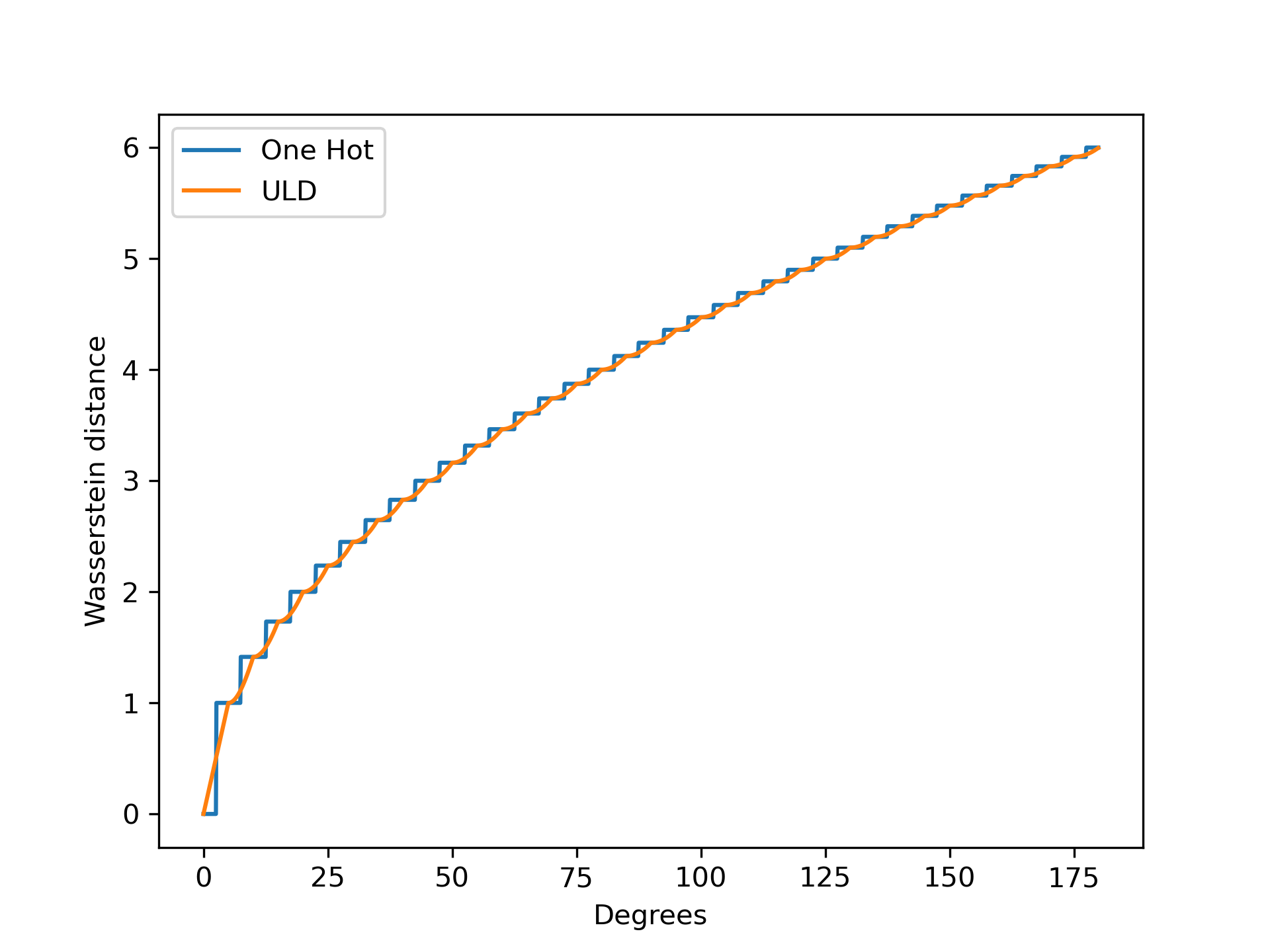}
    \caption{Wasserstein distance between a distribution of any angle between $[0, 180]$ degrees and the distribution of 0 degree, where the label distribution is one-hot or ULD.}
    \label{fig:distributions_wd}
\end{figure}

We use Wasserstein distance (WD) \cite{rubner2000earth} to analyze the connection between ULD and one-hot distribution. WD is a metric that can be used to measure the distance between two discrete probability distributions. 
Figure~\ref{fig:distributions_wd} shows the WD between the arbitrary-angle label distribution and 0-angle distribution of either ULD or one-hot. We see that the WD curves of the ULD and one-hot are closely related. However, unlike the sudden changes in the WD curve of the one-hot distribution, the WD curve of ULD is smooth.

\section{Loss functions}\label{sec:loss}
Once an appropriate label distribution has been established, our next task is to design a reasonable loss function $\mathcal{L}$ in Eq.~\eqref{eq:deep_metric} to train the DNN for yielding a predicted distribution closely matching the label distribution.

\subsection{Analysis}\label{subsubsec:anal2}
We believe that a reasonable $\mathcal{L}$ should have two key properties: (i) the direction of backpropagated gradient should always be correct, and (ii) the magnitude of backpropagated gradient should be proportional to the deviation from the training target to the DNN's output $\boldsymbol{\kappa}$. However, none of the common loss functions, including CE, BCE, and MSE, satisfy both properties for soft labels, as will be analyzed in Section \ref{subsubsec:anal2}. This analysis motivates deriving the NLAE loss and MSE (wo) loss in Section \ref{subsubsec:NLAE}, satisfying both properties simultaneously.

\subsubsection{ CE loss function}
The CE loss function is commonly used in conventional multi-classification problems:
\begin{equation}\label{eq:ce}
    \mathcal{L}^{\mathrm{CE}}=-\sum_{i=0}^I y_i \log \hat{y}_i
\end{equation}
where CE does not directly impose penalties on the classes with zero values in $\boldsymbol{y}$.

\subsubsection{BCE loss function}

In contrast to CE, the BCE loss function has a global receptive field:
\begin{equation}\label{eq:bce}
    \mathcal{L}^{\mathrm{BCE}}=-\sum_{i=0}^I y_i \log \hat{y}_i + (1-y_i) \log (1-\hat{y}_i)
\end{equation}

\begin{lemma}\label{lemma:bce}
  Given a variable set $\{\hat y_1,\hat y_2, ...,\hat y_I\}$, subject to the constraint $\sum_{i=1}^I \hat y_i = c$, where $c$ is a constant real number in $[0, 1]$, the minimum value of $-\sum_{i=1}^I \log(1- \hat y_i)$ is attained when all elements within the set are equal.
\end{lemma}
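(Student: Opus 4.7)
The plan is to reduce the lemma to the strict convexity of the one-variable function $f(t) = -\log(1-t)$ on $[0,1)$, and then invoke Jensen's inequality to pin down the unique minimizer. I first read the statement as having a minor typo: the variable set $\{\hat y_1,\ldots,\hat y_I\}$ and the $y_i$ appearing in the constraint and objective should refer to the same symbols, so throughout the argument I treat them as $\hat y_i \in [0,1)$ with $\sum_{i=1}^{I}\hat y_i = c \in [0,1]$. The boundary case in which some $\hat y_i$ equals $1$ makes the objective $+\infty$ and can never be a minimum, so it is safely excluded.

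Step one is to confirm strict convexity: $f'(t)=1/(1-t)$ and $f''(t)=1/(1-t)^2>0$ on $[0,1)$. Step two is to apply Jensen's inequality to $f$ at the points $\hat y_1,\ldots,\hat y_I$, giving
\begin{equation*}
\frac{1}{I}\sum_{i=1}^{I} f(\hat y_i) \;\geq\; f\!\left(\frac{1}{I}\sum_{i=1}^{I}\hat y_i\right) \;=\; f\!\left(\frac{c}{I}\right),
\end{equation*}
which rearranges to $-\sum_{i=1}^{I}\log(1-\hat y_i)\geq -I\log(1-c/I)$. Because $f$ is strictly convex, equality holds if and only if $\hat y_1=\cdots=\hat y_I=c/I$, so this common value is the unique minimizer. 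As an optional sanity check I would reproduce the same conclusion via a Lagrangian $L=-\sum\log(1-\hat y_i)-\lambda(\sum\hat y_i-c)$: the stationarity condition $1/(1-\hat y_i)=\lambda$ forces $\hat y_i=1-1/\lambda$, a value independent of $i$, and strict convexity of the objective certifies that this critical point is a global minimum rather than a saddle.

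There is essentially no hard step here; the argument is a direct corollary of convexity. The only subtleties are bookkeeping ones: reconciling the $\hat y$ vs $y$ notation in the statement, and verifying that the proposed optimizer $c/I$ lies in the domain $[0,1)$ where $f$ is finite and strictly convex, which is automatic since $c \leq 1 \leq I$ gives $c/I\leq 1/I<1$ (the degenerate case $I=1$ is trivial). I would therefore present the Jensen argument as the main proof and mention the Lagrange derivation only briefly, since both routes yield the same conclusion with negligible effort.
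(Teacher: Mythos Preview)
Your argument is correct, but your primary route differs from the paper's. The paper proves the lemma entirely via the Lagrange multiplier computation that you relegate to an ``optional sanity check'': it sets up $L(\hat y,\lambda) = -\sum_i \log(1-\hat y_i) + \lambda(\sum_i \hat y_i - c)$, solves the first-order conditions to obtain $\hat y_i = 1 - 1/\lambda$ independent of $i$, substitutes into the constraint to get $\hat y_i = c/I$, and then evaluates the objective at this point to record the minimum value $-I\log(1-c/I)$. The paper justifies that this stationary point is a minimum only by the opening remark that the problem is convex. Your Jensen-based proof is more self-contained in this respect: strict convexity of $f(t)=-\log(1-t)$ simultaneously delivers the lower bound, the minimizer, and its uniqueness in one line, without needing to solve for $\lambda$ or appeal separately to second-order conditions. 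Both approaches are standard and short; yours is slightly cleaner conceptually, while the paper's has the minor advantage of explicitly exhibiting the optimal value as a by-product of the computation. Your handling of the $\hat y$ versus $y$ typo and the domain check $c/I < 1$ are also more careful than the paper's own presentation.
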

\begin{proof}
 See Appendix C for the proof.
\end{proof}

The primary distinction between BCE and CE lies in the divergent losses arising from their application to incorrect classes. Without loss of generality, consider a scenario with $I$ zeros in the label. Substituting the label and predicted distributions into BCE yields a loss of $-\sum_{i=1}^I \log(1- \hat y_i)$ for this portion. From Lemma \ref{lemma:bce}, we can infer that this loss is minimized when the incorrect classes in the predicted distribution assume equal values. In conventional multi-classification, classes are typically treated as unrelated. Therefore, the probability values for incorrect classes in the predicted distribution usually similar, aligning subtly with Lemma \ref{lemma:bce}. Consequently, CE optimality emerges, or alternatively, the BCE loss for incorrect classes appears almost indistinguishable. 

However, CE formulation becomes suboptimal for SSL classification. In SSL, class similarity is exceedingly high, prompting DNN output distributions to manifest undesired sidelobes around ground truth classes and even yielding \textit{pseudo peaks}. Given the reverberation, the likelihood of pseudo peaks occurring will escalate. However, these pseudo peaks are not directly perceptible by CE, as depicted in Eq.~\eqref{eq:ce}. Specifically, the highly non-linear amplification of values by the negative log function (as $\hat y_i$ approaches 1, $-\log(1-\hat y_i)$ approaches infinity) results in substantial loss within BCE's second portion when pseudo peaks assume elevated values.
\begin{figure*}[t]
  \centering
  \subfigure[Distribution A]
  {
  \includegraphics[width=0.23\textwidth]{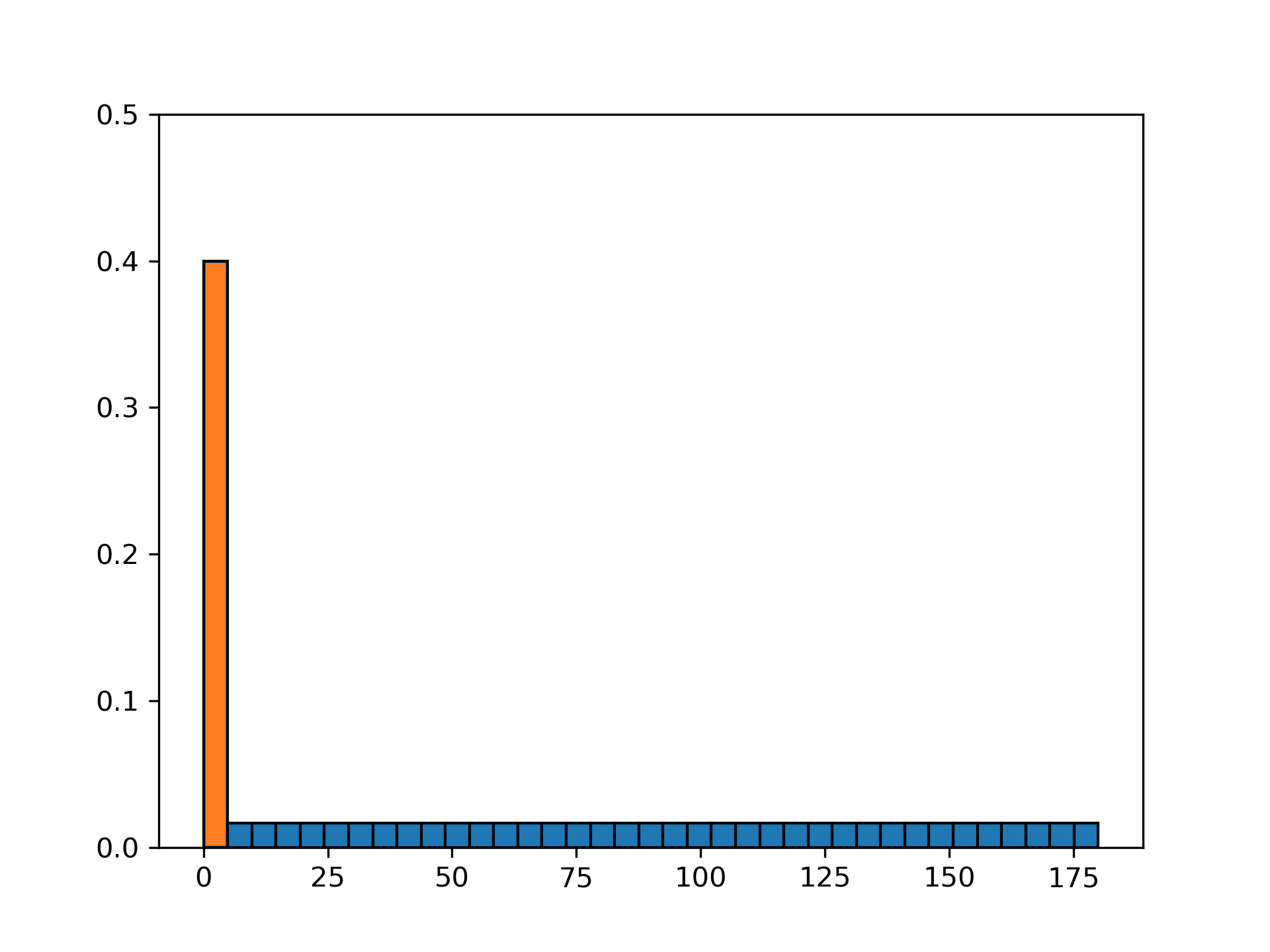}
  \label{fig:prediction_a}
  }
  \subfigure[Distribution B]
  {
  \includegraphics[width=0.23\textwidth]{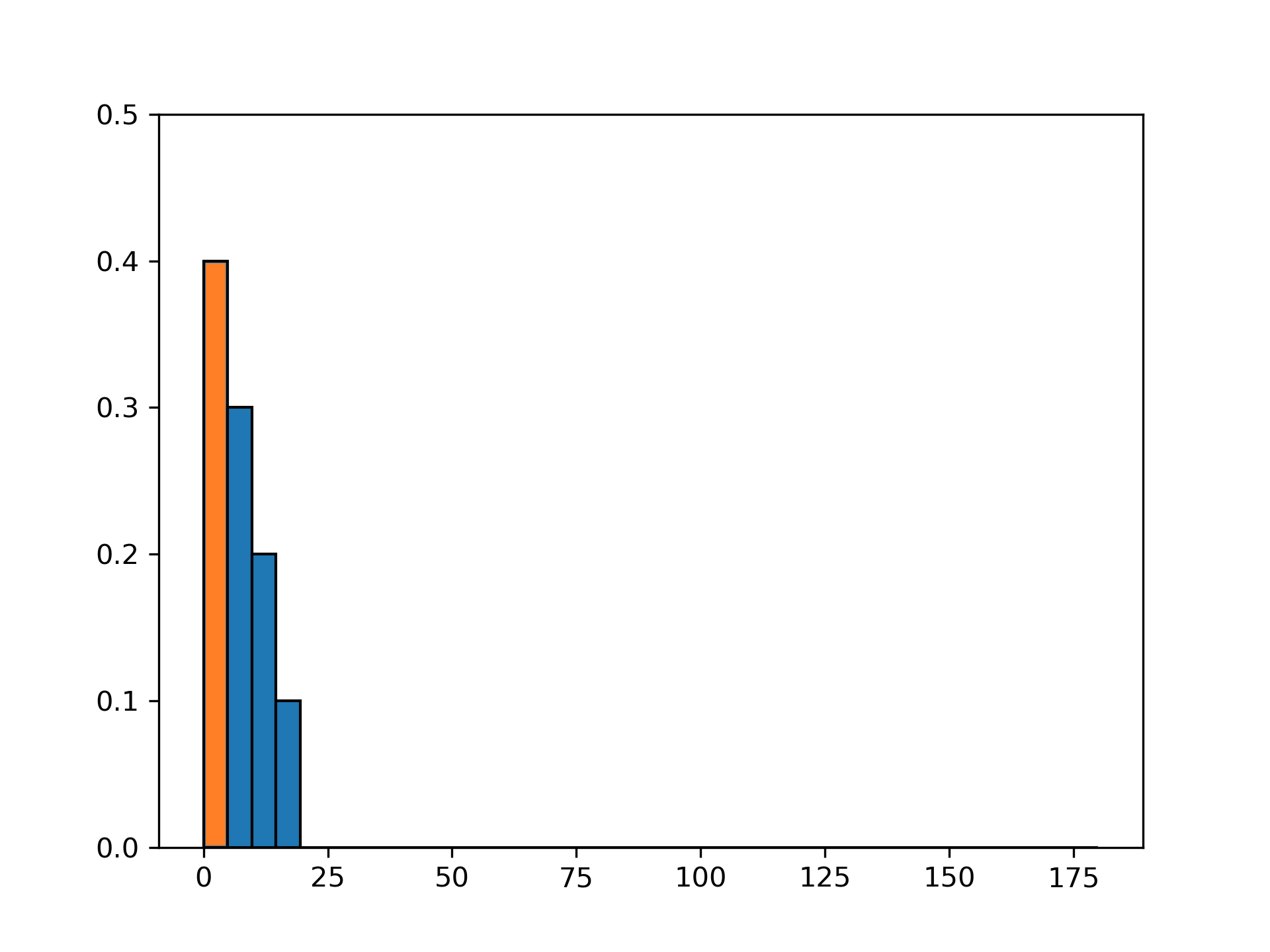}
  \label{fig:prediction_b}
  }\subfigure[Distribution C]
  {
  \includegraphics[width=0.23\textwidth]{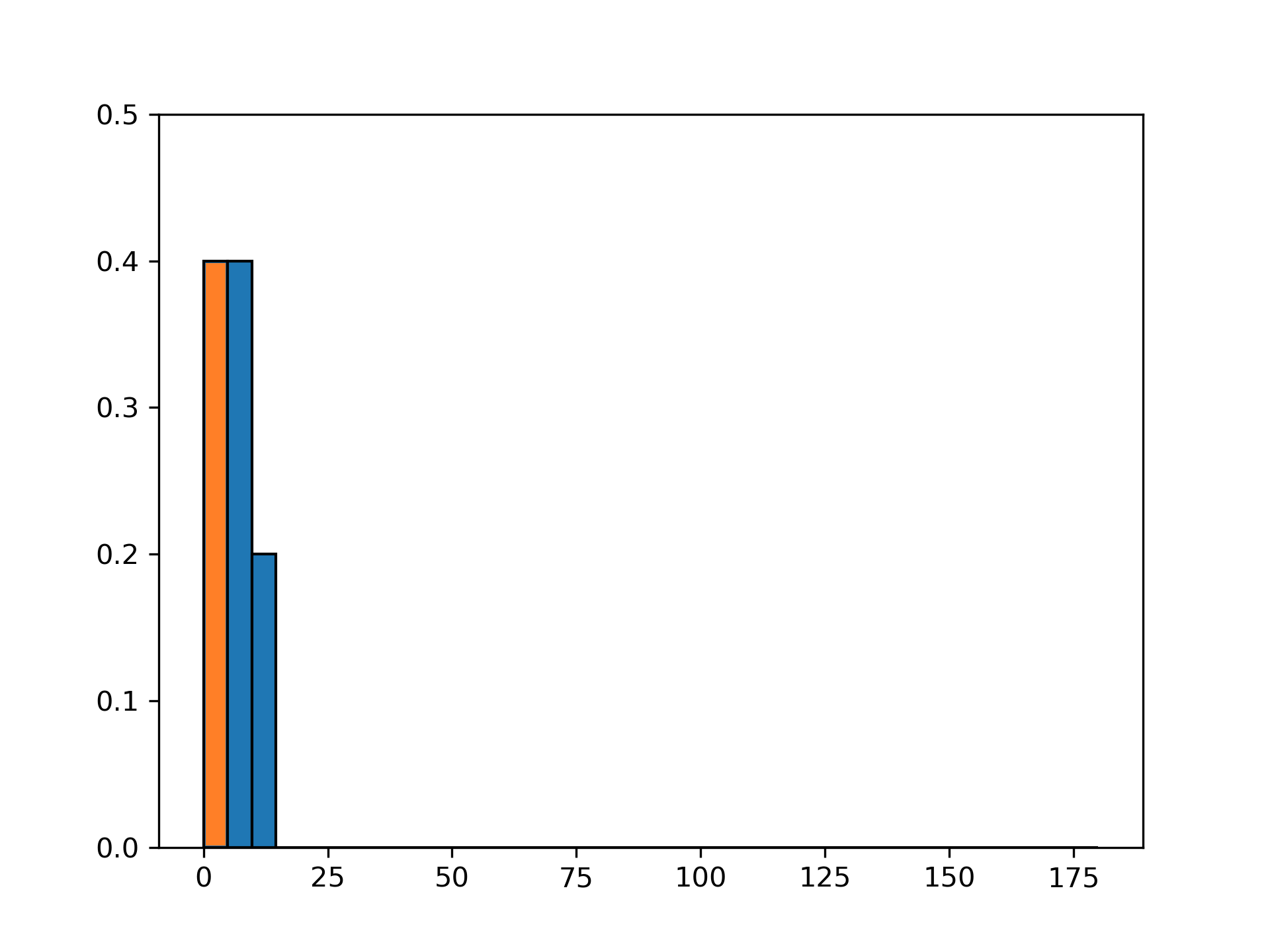}
  \label{fig:prediction_c}
  }\subfigure[Distribution D]
  {
  \includegraphics[width=0.23\textwidth]{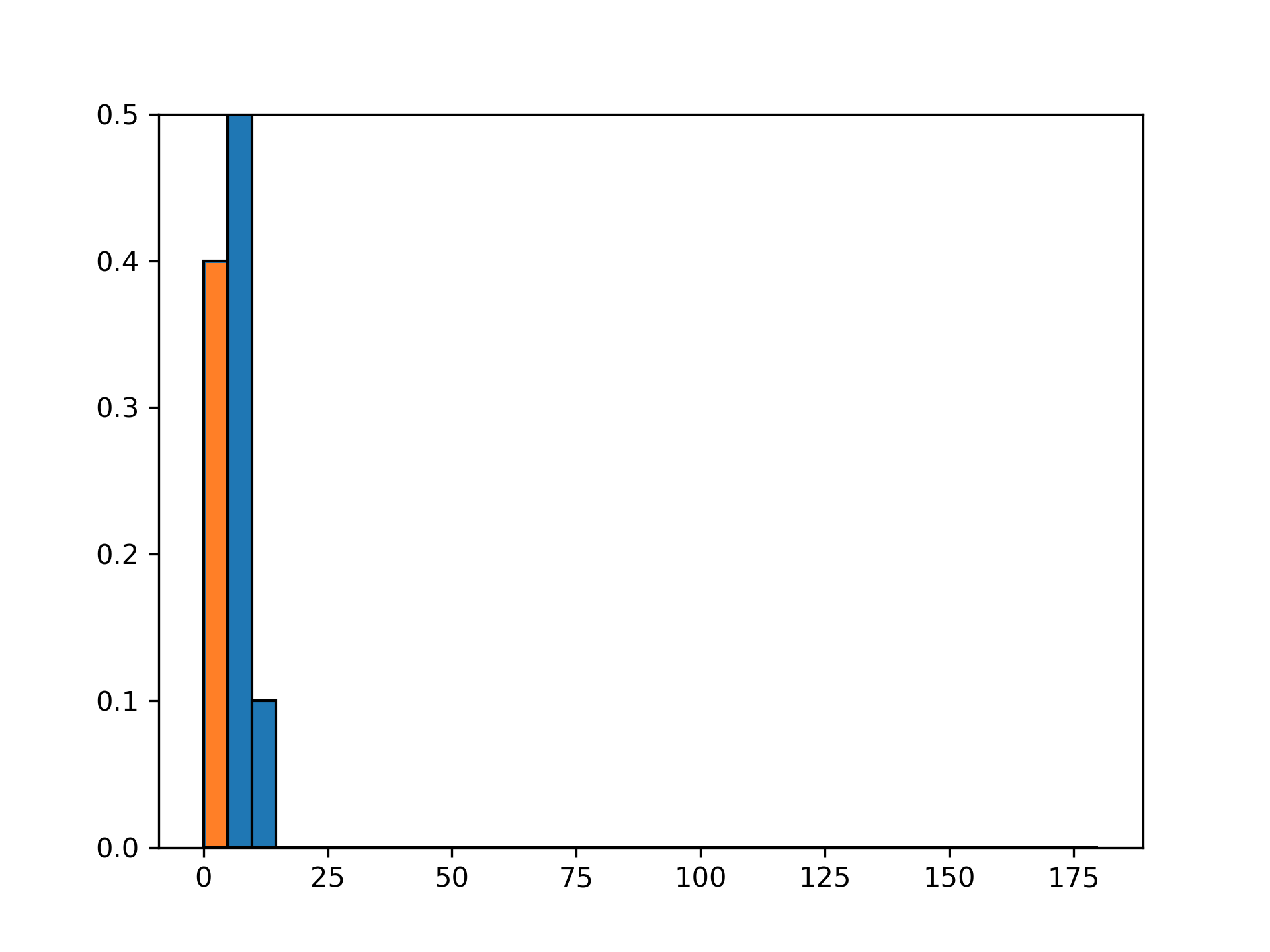}
  \label{fig:prediction_d}
  }
  \caption{An example on the advantage of BCE over CE for SSL. Consider a sound source with $r$ is 180, $l$ is 5 and $p$ is 0. In conventional classification problems, it is easy to have a predicted distribution like distribution A. However, for SSL, it is easy to have predicted distributions like the latter three. Distribution B has unwanted sidelobes, while distributions C and D even have pseudo peaks. The occurrence of pseudo peaks means that classification errors have already occurred. These four distributions have equal probability values in the ground truth class, so the CE loss of these distributions is equal. However, the BCE losses of these four are 1.52, 1.60, 1.65 and 1.71 respectively, meaning that BCE gives greater punishment to these negative factors that are easily encountered in SSL.}
  \label{fig:prediction}
\end{figure*}

Figure~\ref{fig:prediction} provides a concrete example to visually convey our theoretical analysis. The SSL problem is more likely to occur in distributions that significantly deviate from the conditions in Lemma \ref{lemma:bce}. Therefore, we recommend using a loss function with a global receptive field, such as BCE, to suppress pseudo peaks.

However, BCE exhibits an additional defect when dealing with soft labels. Focusing on the non-zero value classes of soft labels, we find BCE simultaneously propagates gradients in two opposite directions (the first part points $\hat y_i$ to 1, the second part points $\hat y_i$ to 0, not $y_i$). In other words, with soft labels, a cross-entropy-like loss has a minimum loss that is not zero, which may be suboptimal.

\subsubsection{MSE loss function}
In contrast to cross-entropy-like loss functions, an alternative approach can be explored by utilizing a loss function that aims for an ideal loss of zero. The MSE loss function, which also possesses a global receptive field, is defined as follows:
\begin{equation}\label{eq:mse}
    \mathcal{L}^{\mathrm{MSE}}=\sum_{i=0}^I (y_i - \hat{y}_i)^2
\end{equation}

It is self-evident that the gradient direction passed back through the MSE function ensures that $\hat{y}_i$ consistently points towards $y_i$. In Eq.~\eqref{eq:ssl_def}, the function $\sigma(\cdot)$ is a highly nonlinear transformation when ${\kappa}_i$ is either very large or very small. As a result, the MSE function often suffers from the problem of gradient disappearance when $\hat{y}_i$ approaches 0 or 1, and is therefore not frequently used for optimizing classification models.

\subsection{Definition}\label{subsubsec:NLAE}

\subsubsection{NLAE loss function}
Given the aforementioned analyses, we have devised a loss function called Negative Log Absolute Error (NLAE). It can be defined as follows:
\begin{equation}\label{eq:nlae}
    \mathcal{L}^{\mathrm{NLAE}}=-\sum_{i=0}^I \log(1 - |y_i - \hat{y}_i|)
\end{equation}

It is evident that the optimization direction of NLAE is to make $1 - |y_i - \hat{y}_i|$ approach 1, which is equivalent to make $y_i = \hat{y}_i$. In particular, when $y_i$ is 0 or 1, $\mathcal{L}^{\mathrm{NLAE}}$ and $\mathcal{L}^{\mathrm{BCE}}$ are equivalent, so the magnitude of the gradient passed back through NLAE is reasonable. Hence, the NLAE loss function is theoretically more suitable for the family of soft labels.

\subsubsection{MSE(wo) loss function}
Alternatively, we can approach the problem from a different perspective. Since the nonlinearity is caused by $\sigma(\cdot)$, we can consider discarding it. In other words, the MSE loss function can be modified to directly operate on $\boldsymbol{\kappa}$, formulated as:
\begin{equation}\label{eq:msewo}
    \mathcal{L}^{\mathrm{MSE (wo)}}=\sum_{i=0}^I (y_i - \kappa_i)^2
\end{equation}
where $\mathcal{L}^{\mathrm{MSE (wo)}}$ represents the MSE loss function without $\sigma(\cdot)$. However, if we take this approach, we need an additional operation to ensure that the predicted distribution does not exceed the boundary values, which can be expressed as $\boldsymbol{\hat y} = \{\mathrm{min}(\mathrm{max}(0, \kappa_i), 1)\}_{i=0}^I$.

The above discussion pertains to loss functions for single label. If necessary, adopting the strategy mentioned in  \cite{feng2023soft}, which calculates the loss for multiple labels separately and then adds them with weights to jointly optimize the DNN, may improve performance.

\section{Weighted adjacent decoding}\label{sec:decoding}
After training a DNN using ULD and the proposed loss functions, we obtain a predicted distribution $\boldsymbol{\hat y}$. This section describes our new approach for transforming $\boldsymbol{\hat y}$ into a DOA estimation $\hat p$.

\subsection{Analysis}
Naturally, we first extract a class $\hat{k}$ corresponding to the peak probability in $\boldsymbol{\hat y}$, which can be represented as:
\begin{equation}\label{eq:peakclass}
    \hat k = \arg \max_i \{\hat{y}_i\}_{i=0}^I
\end{equation}
Then, the source location can be obtained from $\hat{k}$ as:
\begin{equation}\label{eq:top1}
    \hat p = \hat k \cdot l
\end{equation}
which we refer to as the Top-1 Decoding. However, as discussed in Section~\ref{subsec:ana_ULD}, this formulation inevitably introduces quantization error. Even if a DNN achieves 100\% classification accuracy, the mathematical expectation of the quantization error using this decoding is $l/4$.

\subsection{Definition}
For predicted distributions produced from DNN, classes adjacent to the peak often have non-negligible probabilities due to strong correlation and ordering in the discretized DOA output space. Leveraging this, we propose Weighted Adjacent Decoding (WAD). We first pad $\boldsymbol{\hat y}$ with zeros (this step can also be interpreted as setting the probability of the sound source being outside the output space to 0). Then, for specified $\hat k_l$ and $\hat k_r$ classes to left and right of $\hat k$, respectively.

Initially, we consider the scenario where only the peak class $\hat k$ and the adjacent class $\hat k_h$ with relatively high probability are used. Specifically, $\hat k_h$ is defined as $\arg\max_i \{\hat{y}_i\}_{i=\{\hat k_l, \hat k_r\}}$. Since two classes are involved, we refer to this as WAD-2. Hence, WAD-2 is formalized as:
\begin{equation}\label{eq:wad2}
    \hat p = \frac{\sum_{i=\{\hat k,\hat k_h\}} \hat y_i \times i \times  l}{\sum_{i=\{\hat k,\hat k_h\}} \hat y_i}
\end{equation}

In practice, the probability values of the classes represented by $\hat k_l$ and $\hat k_r$ are usually quite high. Thus, it is also intuitive to consider a similar approach, WAD-3, which can be formalized as follows:
\begin{equation}\label{eq:wad3}
    \hat p = \frac{\sum_{i=\{\hat k, \hat k_l, \hat k_r\}} \hat y_i \times i \times  l}{\sum_{i=\{\hat k, \hat k_l, \hat k_r\}} \hat y_i}
\end{equation}

\subsection{Connection between WAD and Top-1 decoding}
From the above formulation, we observe that Top-1 decoding defined in Eq.~\ref{eq:top1} is a special case of WAD with $i=\{\hat{k}\}$. In other words, WAD extends Top-1 decoding by using a weighted combination of multiple classes to mitigate quantization error. In practice, employing more classes than WAD-3 may yield less robust decoding performance. Substituting the ULD from Eq.~\ref{eq:uld} into Eq.~\ref{eq:top1}, the quantization error remains the same as one-hot. However, substituting the ULD into Eq.~\ref{eq:wad2} or Eq.~\ref{eq:wad3} yields zero quantization error. Hence, this integrated output architecture is self-consistent.

\section{Experimental setup}    \label{sec:exp}
\subsection{Datasets}

\begin{table*}[t]
    \centering
    \caption{Specifications of the simulated data.}
      \scalebox{0.92}{\begin{tabular}{ccccccc}
      \toprule
      \multicolumn{2}{c}{Dataset} & C1    & C2    & A1    & L1    & L2 \\
      \midrule
      \multicolumn{2}{c}{Shape of array (m)} & \multicolumn{2}{c|}{Circular, radius = 0.05} & \multicolumn{3}{c}{Linear, aperture = 0.08} \\
  \cmidrule{3-7}    \multicolumn{2}{c}{Self-rotation angle of array (degree)} & 0     & 0     & 0     & [0, 180] & [0, 180] \\
      \multicolumn{2}{c}{Distance from speaker to array (m)} & 1.5   & 1.5   & 1.5   & [0, 14.1] & [0, 14.1] \\
      \multicolumn{2}{c}{Minimum distance from speaker to wall (m)} & 0.5   & 0.5   & 0.5   & 0.0   & 0.0 \\
      \multicolumn{2}{c}{Number of sound sources} & 1     & 2     & 1     & 1     & 2 \\
      \multirow{3}[0]{*}{Reverberation (s)} & train & [0.2, 0.7] & [0.2, 0.7] & anechoic & [0.2, 1.2] & [0.2, 1.2] \\
            & validation & [0.2, 0.7] & [0.2, 0.7] & anechoic & [0.2, 1.2] & [0.2, 1.2] \\
            & test  & [0.2, 0.8] & [0.2, 0.8] & anechoic & [0.2, 1.2] & [0.2, 1.2] \\
      \multirow{3}[1]{*}{Segments} & train & 36000 & 36000 & 18000 & 36000 & 72000 \\
            & validation & 3600  & 3600  & 1800  & 3600  & 7200 \\
            & test  & 4320  & 4320  & 1800  & 3600  & 7200 \\
      \bottomrule
      \end{tabular}}
    \label{tab:simu_dataset}%
  \end{table*}%

In this section, we conducted experiments on both simulated and real-world data. All source speech came from the LibriSpeech corpus \cite{panayotov2015librispeech}. The train-clean-360, dev-clean and test-clean subsets were used to generate corresponding subsets of simulated datasets. We utilized the Pyroomacoustics \cite{scheibler2018pyroomacoustics} module to generate room impulse responses. For each utterance, we randomly set a room size and selected a 2-second segment to generate multi-channel data with reverberation. Each multi-source speech signal was mixed from different speakers. Additionally, we introduced additive noise to the reverberant speech. The additive noise was randomly selected from a large-scale noise set \cite{tan2021speech} containing 126 hours of various types of noises. The signal-to-noise ratio (SNR) of each utterance was randomly chosen from a range of $[10, 20]$ dB. The training, validation, and testing sets had non-overlapping subsets of additive noise, ensuring independence between them.

As shown in Table~\ref{tab:simu_dataset}, we created five sets of simulated datasets, denoted as C1, C2, A1, L1, and L2 respectively, with different acoustic conditions to test the effectiveness and reliability of the proposed method. All datasets use microphone arrays consisting of 4 microphones. The complexity of the acoustic environment is mainly reflected in the reverberation and far-field, covering a wide range from anechoic to highly reverberant. The impact of number of sound sources is also considered in our datasets.

The length and width of a room were randomly chosen within $[4, 10]$m, the height of the room was fixed at 3.2 m, and the height of the sound source and microphone was fixed at 1.3 m. The reverberation time T$_{60}$ of the room was randomly selected within a given range, or the room is set to be anechoic.


For the C1, C2, and A1 datasets, each random room produces a single utterance. For the L1 and L2 datasets, a random room plays an audio once, but with 10 microphone arrays in the room to capture signals. Consequently, each room in L1 and L2 generates 10 distinct DOA segments. The placement of both microphones and sound sources is randomized. As a result, L1 and L2 are two datasets without any constraints on the distance between sound sources and microphone arrays or between sound sources and walls.

We recorded a real-world dataset \cite{liu2022deep} in two scenarios: an office and a conference room respectively. The office room is approximately $10.3\times9.8\times4.2$m\ with a T$_{60}$ of approximately 1.39s. The conference room is approximately $4.26 \times 5.16 \times 3.16$m with a T$_{60}$ of approximately 1.06s. The additive noise in both rooms can be ignored. We used the test-clean subset of LibriSpeech as the sound source to play back in the room, with different speakers corresponding to sound sources played at different locations. The equipment used to record the dataset was one speaker and 10 linear arrays, each with the same shape as those used in L1 and L2. After being divided into 2-second segments, each room had a total of 97,480 samples. We randomly selected segments to generate a subset with two speakers. Therefore, each room contains a total of 7,200 multi-speaker samples. We used the real-world data only for testing, while the simulated datasets of L1 and L2 were used for training and selecting models.

\subsection{Comparison among different label encoding}
We compare ULD with three label encoding, which are:
\begin{itemize}
  \item \textbf{One-hot}.
  \item \textbf{Gaussian Label Coding} \cite{he2018deep}: We followed \cite{he2018deep} and set the standard deviation to 8
  \item \textbf{Soft Label Distribution} \cite{subramanian2022deep}.
\end{itemize}

\subsection{Comparison among different loss functions}
We compare NLAE and MSE (wo) with four loss functions, which are:
\begin{itemize}
  \item \textbf{Cross Entropy (CE)}.
  \item \textbf{Binary Cross Entropy (BCE)}.
  \item \textbf{Mean Squared Error (MSE)}.
  \item \textbf{Wasserstein Distance (WD)} \cite{subramanian2022deep}: \cite{levina2001earth} proved that under specific conditions, WD can be simplified to Mallows distance with a closed-form solution. Based on this, \cite{subramanian2022deep} used it to optimize the SSL models.
\end{itemize}

Note that both CE and WD only apply to label with a sum of 1, so they are not suitable for GLC.

\subsection{Neural networks}
Four neural networks served as backbone networks, and the specific architectures are detailed in the supplementary material. The first network is the Phase Neural Network (PNN) \cite{chakrabarty2019multi}, comprising three convolutional layers and three dense layers. The second network is PNN-Split, a modified version of PNN. Notably, PNN-Split routes the output of the first dense layer through a recurrent layer for implicit speech separation \cite{subramanian2022deep}. Finally, the separated features are passed through another dense layer. The third network is SNet \cite{he2021neural}, while the fourth network is a hybrid model combining PNN-Split and SNet, known as SNet-Split. SNet-Split adopts all feature extraction modules of SNet, flattens the embedding features from the last residual block, and follows the subsequent operations consistent with PNN-Split.

\subsection{Inputs of networks}
We used a sampling rate of 16 kHz, a window length of 512 samples, a hop length of 256 samples, a Hanning window, and 512 FFT points to extract Short-Term Fourier Transform (STFT) features. For PNN, the input is a single frame of the phase spectrum, and the final output probability distribution is averaged along the time dimension during post-processing to obtain the final utterance-level localization probability distribution. For SNet, the input is 7 consecutive frames of STFT, with the real and imaginary parts of the STFT concatenated along the microphone channel dimension. Post-processing also involves averaging along the time dimension. For the networks with Split, the recurrent layer learns two masks. These masks is first multiplied by the embedding features, and then averaged along the time dimension to locate each sound source, therefore, no post-processing is required.

\subsection{Training and evaluation details}
For all experiments, we employed the AdamW optimizer with a batch size of 32 and a maximum of 30 training epochs. The learning rate was initialized at 0.001 and reduced to 0.0001 if the validation loss did not improve over 3 consecutive epochs. Training was terminated early if the model’s loss on the validation set did not improve for 10 consecutive epochs. We selected the model with the minimum localization error on the validation set for evaluation. The PNN and SNet were trained and tested on single-source datasets, while the Split networks was used for multi-source datasets. Since the DOA space is inherently ordered, it is easy to train multi-source models using location-based training \cite{taherian2022lbt}. For the dataset of C1, $l$ was set to 3; for C2, $l$ was set to 8; for A1, L1, and R1, $l$ was set to 5; and for L2 and R2, $l$ was set to 7.5.

\subsection{Evaluation metrics}
Suppose a dataset has $N$ test speakers. As we primarily discusses classification models, a natural evaluation metric is classification accuracy (ACC), which can be formalized as follows:
\begin{equation}
    \mathrm{ACC}(\%)=\frac{N^\mathrm{acc}}{N} \times 100
\end{equation}
where $N^\mathrm{acc}$ is the number of speakers for which the peak class of the predicted distribution equals to the ground truth class.

Of course, the most intuitive evaluation metric for SSL should be the mean absolute error (MAE) between the predicted source position and true source position, which can be described as follows:
\begin{equation}\label{eq:mae}
    \mathrm{MAE}(\circ)=\frac{1}{N} \sum_{n=1}^N \mathrm{min}(|\hat p_n - p_n|, 360-|\hat p_n - p_n|)
\end{equation}

\section{Experimental results}\label{sec:res}

\begin{table}[t]
    \centering
    \caption{Experimental results on the dataset of A1, where QE is short for quantization error, the loss function is NLAE, and the backbone network is PNN.}
      \scalebox{0.92}{\begin{tabular}{c|c|ccc}
        \toprule
              & ACC   & \multicolumn{3}{c}{MAE } \\
        \midrule
        Regression & ---     & \multicolumn{3}{c}{0.642} \\
        QE limit & 100.00 & \multicolumn{3}{c}{1.223} \\
        \midrule
              &       & Top-1 & WAD-2 & WAD-3 \\
        One-hot & 98.56 & 1.225 & 0.920 & 0.924 \\
        GLC   & 97.50 & 1.231 & 1.036 & 0.697 \\
        SLD   & 97.44 & 1.237 & 1.437 & 1.144 \\
        \textbf{ULD} & \textbf{98.67} & \textbf{1.224} & \textbf{0.065} & \textbf{0.061} \\
        \bottomrule
        \end{tabular}}
    \label{tab:a1}%
\end{table}%

\begin{table*}[t]
    \centering
    \caption{Results on the dataset of C1, the loss function is NLAE, the encoding method is ULD, and the backbone network is PNN.}
      \scalebox{0.92}{\begin{tabular}{ccccccccccccc}
      \toprule
      \multicolumn{2}{c}{$l$} & Regression & 360   & 180   & 90    & 45    & 20    & 10    & 5     & 3     & 2     & 1 \\
      \midrule
      \multicolumn{2}{c}{ACC } & ---     & 99.21 & 97.18 & 97.64 & 96.30 & 96.39 & 94.68 & 92.27 & 87.59 & 80.51 & 54.77 \\
      \midrule
      \multicolumn{2}{c}{QE limit} & 0     & 89.056 & 45.740 & 22.519 & 11.407 & 5.059 & 2.481 & 1.244 & 0.752 & 0.504 & 0.249 \\
      \midrule
      \multirow{3}[1]{*}{MAE } & Top-1 & \multirow{3}[1]{*}{21.725} & 89.056 & 46.012 & 22.666 & 11.522 & 5.102 & 2.538 & 1.307 & 0.853 & 0.676 & \textbf{0.625} \\
            & WAD-2 &       & 17.562 & 7.336 & 3.226 & 1.847 & 1.028 & 0.793 & 0.649 & 0.557 & 0.547 & \textbf{0.541} \\
            & WAD-3 &       & 17.562 & 10.055 & 3.756 & 1.865 & 1.005 & 0.719 & 0.540 & \textbf{0.476} & 0.486 & 0.561 \\
      \bottomrule
      \end{tabular}}
    \label{tab:c1_reso}%
\end{table*}%

\begin{table}[t]
  \centering
  \caption{Results on the lightly-reverberant dataset C1, where the backbone network is PNN.}
    \scalebox{0.92}{\begin{tabular}{cccccc}
    \toprule
    \multirow{2}[4]{*}{Encoding} & \multirow{2}[4]{*}{Loss} & \multirow{2}[4]{*}{ACC} & \multicolumn{3}{c}{MAE} \\
    \cmidrule{4-6}          &       &       & Top-1 & WAD-2 & WAD-3 \\
    \midrule
    \multirow{5}[2]{*}{One-hot} & CE    & 86.57 & 0.869 & 0.572 & 0.517 \\
          & MSE   & 85.93 & 0.881 & 0.558 & 0.522 \\
          & WD    & 53.61 & 1.681 & 1.716 & 1.701 \\
          \cmidrule{2-6}
          & \textbf{NLAE} & 86.39 & 0.868 & 0.565 & \textbf{0.514} \\
          & MSE (wo) & 84.63 & 0.905 & 0.591 & 0.558 \\
    \midrule
    \multirow{4}[2]{*}{GLC  \cite{he2018deep}} & BCE   & 81.34 & 0.954 & 0.895 & 0.894 \\
          & MSE  \cite{he2018deep} & 80.23 & 0.961 & 0.889 & 0.898 \\
          \cmidrule{2-6}
          & NLAE  & 80.42 & 0.950 & 0.882 & 0.880 \\
          & \textbf{MSE (wo)} & 82.59 & 0.921 & 0.881 & \textbf{0.846} \\
    \midrule
    \multirow{6}[2]{*}{SLD  \cite{subramanian2022deep}} & CE & 80.00 & 0.947 & 0.800 & 0.765 \\
          & BCE   & 81.62 & 0.929 & 0.794 & 0.744 \\
          & MSE   & 83.84 & 0.881 & 0.791 & 0.713 \\
          & WD  \cite{subramanian2022deep} & 55.51 & 1.536 & 1.665 & 1.636 \\
          \cmidrule{2-6}
          & NLAE  & 82.22 & 0.918 & 0.775 & 0.716 \\
          & \textbf{MSE (wo)} & 84.95 & 0.881 & 0.781 & \textbf{0.688} \\
    \midrule
    \multirow{6}[2]{*}{\textbf{ULD}} & CE    & 86.71 & 0.852 & 0.574 & 0.490 \\
          & BCE   & 87.18 & 0.852 & 0.572 & 0.489 \\
          & MSE   & 87.41 & 0.853 & 0.561 & 0.487 \\
          & WD    & 55.69 & 1.567 & 1.629 & 1.606 \\
          \cmidrule{2-6}
          & \textbf{NLAE} & 87.59 & 0.853 & 0.557 & \textbf{0.476} \\
          & MSE (wo) & 87.69 & 0.854 & 0.600 & 0.503 \\
    \bottomrule
    \end{tabular}}
  \label{tab:c1_pnn}
\end{table}%

\begin{table}[t]
  \centering
  \caption{Results on the heavily-reverberant dataset of L1, where the backbone network is PNN.}
    \scalebox{0.92}{\begin{tabular}{cccccc}
    \toprule
    \multirow{2}[4]{*}{Encoding} & \multirow{2}[4]{*}{Loss} & \multirow{2}[4]{*}{ACC} & \multicolumn{3}{c}{MAE} \\
\cmidrule{4-6}          &       &       & Top-1 & WAD-2 & WAD-3 \\
    \midrule
    \multirow{5}[2]{*}{One-hot} & CE    & 68.17 & 3.947 & 3.696 & 3.649 \\
          & MSE   & 68.53 & 3.985 & 3.751 & 3.693 \\
          & WD    & 62.94 & 4.408 & 4.184 & 4.154 \\
          \cmidrule{2-6}
          & \textbf{NLAE} & 65.50 & 3.814 & 3.592 & \textbf{3.539} \\
          & MSE (wo) & 65.78 & 4.040 & 3.874 & 3.787 \\
    \midrule
    \multirow{4}[2]{*}{GLC  \cite{he2018deep}} & BCE   & 62.08 & 3.938 & 3.921 & 3.789 \\
          & MSE  \cite{he2018deep} & 60.94 & 3.924 & 3.883 & 3.761 \\
          \cmidrule{2-6}
          & NLAE  & 62.75 & 3.936 & 3.932 & 3.760 \\
          & \textbf{MSE (wo)} & 59.56 & 3.693 & 3.652 & \textbf{3.552} \\
    \midrule
    \multirow{6}[2]{*}{SLD  \cite{subramanian2022deep}} & \textbf{CE} & 62.67 & 3.670 & 3.694 & \textbf{3.537} \\
          & BCE   & 61.61 & 3.804 & 3.791 & 3.669 \\
          & MSE   & 62.19 & 4.011 & 3.973 & 3.866 \\
          & WD  \cite{subramanian2022deep} & 57.78 & 4.337 & 4.313 & 4.185 \\
          \cmidrule{2-6}
          & NLAE  & 60.67 & 4.763 & 4.725 & 4.630 \\
          & MSE (wo) & 63.78 & 4.005 & 4.006 & 3.871 \\
    \midrule
    \multirow{6}[2]{*}{\textbf{ULD}} & CE    & 67.03 & 3.988 & 3.713 & 3.649 \\
          & BCE   & 69.08 & 3.689 & 3.459 & 3.363 \\
          & MSE   & 68.53 & 4.012 & 3.735 & 3.663 \\
          & WD    & 65.75 & 4.427 & 4.233 & 4.137 \\
          \cmidrule{2-6}
          & NLAE  & 65.44 & 4.632 & 4.385 & 4.295 \\
          & \textbf{MSE (wo)} & 65.39 & 3.481 & 3.179 & \textbf{3.148} \\
    \bottomrule
    \end{tabular}}
  \label{tab:l1_pnn}%
\end{table}%

\begin{table}[t]
  \centering
  \caption{Main results on the real-world data, where the backbone network is PNN.}
  \scalebox{0.8}{
    \begin{tabular}{cccccc}
    \toprule
    \multirow{2}[3]{*}{Test data} & \multirow{2}[3]{*}{Method} & \multirow{2}[3]{*}{ACC} & \multicolumn{3}{c}{MAE} \\
\cmidrule{4-6}          &       &       & Top-1 & WAD-2 & WAD-3 \\
    \midrule
    \multirow{5}[0]{*}{Office} & One-hot + CE & 62.60 & 3.168 & 3.181 & 3.077 \\
          & One-hot + NLAE & 63.17 & 3.071 & 3.087 & 2.978 \\
          & GLC + MSE(wo) & 57.93 & 3.518 & 3.926 & 3.500 \\
          & SLD + MSE(wo) & 60.43 & 3.184 & 3.582 & 3.163 \\
          & \textbf{ULD + MSE(wo)} & 64.12 & 3.008 & 3.116 & \textbf{2.925} \\
    \midrule
    \multirow{5}[1]{*}{Conference} & One-hot + CE & 54.32 & 6.888 & 6.905 & 6.805 \\
          & \textbf{One-hot + NLAE} & 53.84 & 5.192 & 5.268 & \textbf{5.138} \\
          & GLC + MSE(wo) & 49.90 & 5.816 & 6.078 & 5.814 \\
          & SLD + MSE(wo) & 53.71 & 6.052 & 6.498 & 6.052 \\
          & ULD + MSE(wo) & 54.18 & 5.431 & 5.635 & 5.420 \\
    \bottomrule
    \end{tabular}}
  \label{tab:r1_pnn}%
\end{table}%

\begin{table}[t]
  \centering
  \caption{Main results on the single-source data, where the backbone network is Snet, and the loss function for soft labels is MSE (wo).}
  \scalebox{0.85}{
    \begin{tabular}{cccccc}
    \toprule
    \multirow{2}[3]{*}{Test data} & \multirow{2}[3]{*}{Method} & \multirow{2}[3]{*}{ACC} & \multicolumn{3}{c}{MAE} \\
    \cmidrule{4-6}          &       &       & Top-1 & WAD-2 & WAD-3 \\
    \midrule
    \multirow{4}[1]{*}{Simulated data L1} & One-hot & 75.79 & 2.612 & 2.265 & 2.236 \\
    & GLC   & 71.44 & 2.292 & 2.214 & 2.055 \\
    & SLD   & 69.93 & 2.543 & 2.544 & 2.360 \\
    & \textbf{ULD} & 76.88 & 2.176 & 1.782 & \textbf{1.696} \\
    \midrule
    \multirow{4}[0]{*}{Office} & One-hot & 67.48 & 2.285 & 2.197 & 2.169 \\
    & GLC   & 61.11 & 2.474 & 2.766 & 2.413 \\
    & SLD   & 59.88 & 2.555 & 2.894 & 2.521 \\
    & \textbf{ULD} & 68.96 & 2.177 & 2.208 & \textbf{2.145} \\
    \midrule
    \multirow{4}[1]{*}{Conference} & One-hot & 56.57 & 4.616 & 4.488 & 4.483 \\
    & \textbf{GLC} & 53.70 & 4.337 & 4.462 & \textbf{4.257} \\
    & SLD   & 52.97 & 4.519 & 4.696 & 4.465 \\
    & ULD   & 57.18 & 4.583 & 4.476 & 4.456 \\
    \bottomrule
    \end{tabular}%
    }
  \label{tab:l1_snet}%
\end{table}%

\begin{table}[t]
  \centering
  \caption{Main results on the simulated multi-source dataset C2, where the backbone network is PNN-Split.}
    \scalebox{0.92}{\begin{tabular}{cccccc}
    \toprule
    \multirow{2}[4]{*}{Encoding} & \multirow{2}[4]{*}{Loss} & \multirow{2}[4]{*}{ACC} & \multicolumn{3}{c}{MAE} \\
\cmidrule{4-6}          &       &       & Top-1 & WAD-2 & WAD-3 \\
    \midrule
    One-hot & WD    & 68.22 & 7.189 & 6.573 & 6.576 \\
    \textbf{GLC}   & \textbf{MSE (wo)} & 77.09 & 5.855 & 5.391 & \textbf{5.043} \\
    SLD   & MSE (wo) & 71.88 & 5.588 & 5.386 & 5.203 \\
    ULD   & MSE (wo) & 79.33 & 6.089 & 5.291 & 5.114 \\
    \bottomrule
    \end{tabular}}
  \label{tab:c2_pnn}%
\end{table}%

\begin{table}[t]
    \centering
    \caption{Results of the combined encoding method with respect to the parameter $\alpha$ on C2, where the backbone network is PNN-Split, and the loss function is MSE(wo).}
    \scalebox{0.88}{
      \begin{tabular}{cccccc}
      \toprule
      \multirow{2}[4]{*}{Encoding} & \multirow{2}[4]{*}{$\alpha$} & \multirow{2}[4]{*}{ACC} & \multicolumn{3}{c}{MAE} \\
  \cmidrule{4-6}          &       &       & Top-1 & WAD-2 & WAD-3 \\
      \midrule
      \multirow{6}[2]{*}{$\alpha$ULD+$(1-\alpha)$GLC} & 0.0   & 77.09 & 5.855 & 5.391 & 5.043 \\
            & \textbf{0.2} & 78.40 & 5.334 & 4.815 & \textbf{4.465} \\
            & 0.4   & 78.65 & 5.595 & 5.073 & 4.710 \\
            & 0.6   & 77.29 & 5.939 & 5.347 & 5.049 \\
            & 0.8   & 79.22 & 6.082 & 5.423 & 5.167 \\
            & 1.0   & 79.33 & 6.089 & 5.291 & 5.114 \\
      \bottomrule
      \end{tabular}
    }
    \label{tab:c2_joint}
  \end{table}%

\begin{table}[t]
    \centering
    \caption{Results on the highly-reverberant multi-source data, where the backbone network is the SNet-Split. When using one-hot encoding, the training loss function is WD, while for all others it is MSE(wo). The parameter $\alpha$ in ``$\alpha$ULD+$(1-\alpha)$GLC'' is set to 0.2.}
    \scalebox{0.7}{
      \begin{tabular}{cccccc}
      \toprule
      \multirow{2}[4]{*}{Subset} & \multirow{2}[4]{*}{Method} & \multirow{2}[4]{*}{ACC} & \multicolumn{3}{c}{MAE} \\
  \cmidrule{4-6}          &       &       & Top-1 & WAD-2 & WAD-3 \\
      \midrule
      \multirow{5}[2]{*}{Simulated data L2} & One-hot & 62.56 & 6.145 & 5.892 & 5.896 \\
            & GLC   & 70.74 & 4.760 & 4.386 & 4.093 \\
            & SLD   & 70.64 & 4.798 & 4.765 & 4.495 \\
            & ULD   & 73.53 & 5.296 & 4.554 & 4.524 \\
            & \textbf{$\alpha$ULD+$(1-\alpha)$GLC} & 73.01 & 4.446 & 4.008 & \textbf{3.739} \\
      \midrule
      \multirow{5}[2]{*}{Office} & One-hot & 60.47 & 6.422 & 6.309 &6.291 \\
            & GLC   & 64.23 & 6.130 & 6.756 & 6.180 \\
            & SLD   & 63.62 & 6.204 & 7.005 & 6.244 \\
            & ULD   & 62.79 & 6.505 & 6.500 & 6.434 \\
            & \textbf{$\alpha$ULD+$(1-\alpha)$GLC} & 65.22 & 6.107 & 6.620 & \textbf{6.081} \\
      \midrule
      \multirow{5}[2]{*}{Conference} & One-hot & 44.96 & 12.723 & 12.552 & 12.543 \\
            & GLC   & 53.86 & 11.349 & 11.442 & 11.211 \\
            & SLD   & 53.63 & 11.714 & 11.941 & 11.640 \\
            & ULD   & 54.58 & 12.776 & 12.655 & 12.591 \\
            & \textbf{$\alpha$ULD+$(1-\alpha)$GLC} & 53.68 & 10.519 & 10.459 & \textbf{10.229} \\
      \bottomrule
      \end{tabular}%
      }
    \label{tab:l2_exp}%
  \end{table}%

\subsection{Empirical study on breaking quantization error limit}
\subsubsection{Results on anechoic environment}
As the aforementioned, the data for A1 is anechoic and the array itself does not have random angles. The distance between the source and the microphone array is fixed at 1.5m. Overall, this is a very simple dataset that we primarily use to investigate the issue of quantization error. As shown in Table~\ref{tab:a1}, the quantization error limit is 1.223, which confirms our theoretical analysis that the quantization error is approximately $l/4$ (in this case, 1.250). If we use classical Top-1 decoding, then the MAE is bounded by the quantization error limit. However, when we use ULD in conjunction with WAD, the quantization error limit has been significantly broken through. Interestingly, applying WAD to the one-hot encoding actually reduces the MAE and breaks the quantization error limit as well, due to the presence of sidelobes (the loss is not 0).

\subsubsection{Results on reverberant environment}

In the previous section, we have observed that WAD could break the quantization error limit in the anechoic environment, here we study WAD in a reverberant environment. To study the effect of WAD integratively, we tune the the parameter $l$ in a wide range. Specifically, the parameter $l$ determines the azimuth range of a cell (i.e. a class), so as to the number of classes.
When $l$ decreases from 360 to 1, the number of classes naturally increases, which results in an increased model complexity accordingly.

Table~\ref{tab:c1_reso} lists the performance of decoding methods along with the parameter $l$ in the reverberant data C1. From the table, we see that WAD breaks the quantization error limit. Specifically, when $l\geq3$ , WAD yields smaller MAE than the quantization error limit, while the Top-1 decoding yields larger MAE than quantization error. The best performance of WAD appears at $l=3$, and when $l=1$, the MAE of WAD becomes larger than the quantization error limit. This phenomenon is mainly caused by the overfitting problem of the classification model. Specifically, reducing $l$ from 3 to 2 requires increasing the number of classes from 121 to 181, and reducing $l$ from 2 to 1 even requires increasing the number of classes from 181 to 361. As the number of classes increases significantly, the corresponding increase in complexity significantly reduces ACC, which may in turn reduce localization performance.

Another interesting result is that, even when $l$ is as large as 360 (i.e., SSL is formulated as a binary classification problem), the MAE of the proposed WAD is still smaller than the regression model. We believe this is due to that WAD can be seen essentially as performing regression within each class, which integrates the merits of both classification and regression.

\subsection{single-source localization}

\subsubsection{Results on simulated data}

Table~\ref{tab:c1_pnn} lists the performance of various combinations of encoding methods, loss functions, and decoding methods on the lightly-reverberant dataset C1. From the table, it can be seen that ULD is the best label encoding method in this environment; NLAE is the best loss function; and WAD-3 is the best decoding method in almost all cases except with the WD loss function. The reason for the poor performance of the ``ULD+WD loss'' scheme, we believe, is that WD can essentially be viewed as a special form of global regression, therefore inheriting the vulnerability of global regression when used for SSL, such as suffering from confusion similar to that in Eq.~\eqref{eq:mae}. Compared to the most common combination of one-hot encoding with CE and Top-1 decoding, the combination of ULD with NLAE and WAD-3 reduces the MAE by 45.22\%.

Table~\ref{tab:l1_pnn} further lists the performance of comparison methods on the heavily reverberant L1 dataset, where linear arrays have random rotation angles and distances between sound sources and microphone arrays are unconstrained. The table shows the proposed strategy of ULD, MSE(wo), and WAD-3 achieves top performance, 20.24\% above conventional one-hot paradigm. We find NLAE performs poorly in this adverse condition.

\subsubsection{Results on real-world data}

\textbf{First, we note that, due to space constraint of the paper, we only report main experimental results in the following sections, leaving full results in the supplementary material.}

Table~\ref{tab:r1_pnn} lists the performance of comparison models on two real-world datasets, trained on simulated data L1. From the table, we see that: (i) the proposed ULD, MSE(wo), and WAD-3 strategy performs best on the office room dataset; (ii) the proposed NLAE and WAD-3, combined with one-hot encoding, performs best on the conference room dataset, closely followed by the proposed ULD, MSE(wo), and WAD-3 strategy; (iii) as analyzed earlier, the CE loss function underperforms significantly compared to the proposed NLAE and MSE(wo), highlighting the importance of using a loss function with a global receptive field.

\subsubsection{Effect of backbone networks on performance}
In previous experiments, all backbone networks were PNN. In this subsection, we study how backbone networks affect performance. Table \ref{tab:l1_snet} lists comparison method performance using SNet as the backbone network on both simulated data L1 and real-world data. Compared to Table \ref{tab:r1_pnn}, we see the proposed ULD, MSE(wo), and WAD-3 strategy performs best on L1 and the office room, consistent with results using the PNN backbone network. Although the best performance in the conference room appears with proposed MSE(wo) and WAD-3 combined with GLC, proposed ULD, MSE(wo), and WAD-3 performance follows closely, consistent with the PNN backbone results.

\subsection{multi-source localization}

Table~\ref{tab:c2_pnn} lists the performance of the comparison methods on the multi-source data C2. From the table, we see that WAD-3 remains the best decoding method in almost all cases; MSE(wo) still fits the soft labels best, including GLC, SLD, and the proposed ULD. We also observe that GLC is slightly better than the proposed ULD. This phenomenon may be caused by the extreme smoothness of GLC, making it more conducive to model training in challenging scenarios like multi-source localization.

Note that GLC and ULD exhibit distinct advantages---while one emphasizes greater smoothness, the other prioritizes higher precision, suggesting their combined use. We designed a joint training method using weighted loss with training objectives for both ULD and GLC: ``$\alpha$ULD+$(1-\alpha)$GLC'', where $\alpha \in [0,1]$ is a tunable parameter. Table~\ref{tab:c2_joint} lists the performance of the combined encoding method with respect to $\alpha$. From the table, we see the combined encoding method significantly outperforms its components, i.e. GLC and ULD. The best performance of the combined method appears at $\alpha = 0.2$.

Finally, we conducted experiments on a set of highly challenging datasets, L2, and real-world datasets featuring two speakers with significant reverberation and considerable distance from microphones. As shown in Table~\ref{tab:l2_exp}, results are overall consistent with previous experiments, showcasing sustained superior performance of our method. However, given the adverse conditions, ACC is notably low. Classification error predominates over quantization error, limiting performance gains.

\section{Conclusions}    \label{sec:conclusion}
In this paper, we propose a novel output architecture for SSL, incorporating three novel components: (i) ULD as an encoding method, (ii) NLAE and MSE(wo) as training loss functions, and (iii) WAD as a decoding method. Specifically, unlike one-hot encoding, ULD is a one-to-one encoding method, i.e. an unbiased inverse mapping between sound source position and label distribution. NLAE and MSE(wo) are new loss functions integrating benefits of cross-entropy-like and MSE-like functions. They can be viewed as regression-based loss functions applied per class. Unlike Top-1 decoding, WAD considers not only peak class but also sidelobes during decoding. Experimental results on both simulated and real-world data show WAD significantly outperforms quantization error limits, especially with ULD encoding. The proposed NLAE is best for soft labels in simple environments, while MSE(wo) performs best for soft labels in challenging environments. The overall output architecture performs best in most cases.

\appendix
\section*{Appendix A. Proof of Theorem \ref{thm:qe}}
\begin{proof}

  Let $\gamma$ be a non-negative real number, $n$ be the integer part of $\gamma$, and $\mathrm{round}(\gamma)$ be the integer obtained by rounding $\gamma$. Then the expected value of $|\gamma-\mathrm{round}(\gamma)|$ is:
  $$
  \mathbb{E}(|\gamma-\mathrm{round}(\gamma)|) = \int_{n}^{n+1} |x-\mathrm{round}(x)|dx
  $$

  We can split the interval $[n, n+1]$ into two parts: $[n, n+0.5)$ and $[n+0.5, n+1]$. In the interval $[n, n+0.5)$, $\mathrm{round}(x)=n$; in the interval $[n+0.5, n+1]$, $\mathrm{round}(x)=n+1$. Therefore:

  $$
  \begin{aligned}
  &\mathbb{E}(|\gamma-\mathrm{round}(\gamma)|) \\
  &= \int_{n}^{n+0.5} |x-n|dx + \int_{n+0.5}^{n+1} |x-(n+1)|dx \\
  &= \int_{n}^{n+0.5} (x-n)dx + \int_{n+0.5}^{n+1} ((n+1)-x)dx \\
  &= \left[\frac{(x-n)^2}{2}\right]_{x=n}^{x=n+0.5} + \left[(n+1)x-\frac{x^2}{2}\right]_{x=n+0.5}^{x=n+1} \\
  &= \frac{1}{8} + \frac{1}{8} \\
  &= \frac{1}{4}
  \end{aligned}
  $$

  Through the above formula, we can further calculate the mathematical expectation of the quantization error as follows:
  $$
  \begin{aligned}
      \mathbb{E}(qe) &= \mathbb{E}(|p - \sum_{i=0}^I y^{\mathrm{1-hot}}_{i} \times i \times l|) \\
        &= \mathbb{E}(|\gamma \times l - 1 \times \mathrm{round}(\gamma) \times l|) \\
        &= \mathbb{E}(\gamma - \mathrm{round}(\gamma)|) \times l \\
        &= l / 4
  \end{aligned}
  $$
\end{proof}

\section*{Appendix B. Proof of Theorem \ref{thm:uld}}
\begin{proof}
\begin{align*}
    &(1 - \mathrm{deci}(\gamma)) \times \mathrm{int}(\gamma) + \mathrm{deci}(\gamma) \times (\mathrm{int}(\gamma) + 1) \\
      &= \mathrm{int}(\gamma) - \mathrm{deci}(\gamma) \times \mathrm{int}(\gamma) + \mathrm{deci}(\gamma) \times \mathrm{int}(\gamma) + \mathrm{deci}(\gamma) \\
      &= \mathrm{int}(\gamma) + \mathrm{deci}(\gamma) \\
      &= \gamma
\end{align*}
\end{proof}

\section*{Appendix C. Proof of Lemma \ref{lemma:bce}}
\begin{proof}
This is a convex optimization problem. First, we define the Lagrangian function:

$$
L(\hat y, \lambda) = -\sum_{i=1}^{I} \log(1-\hat y_i) + \lambda \left(\sum_{i=1}^{I} \hat y_i - c\right)
$$
where $\lambda$ is the Lagrange multiplier.

Taking the partial derivatives of $L(\hat y, \lambda)$ with respect to $\hat y_i$ and $\lambda$, and setting them to zero, we get:
$$
\frac{\partial L}{\partial \hat y_i} = -\frac{1}{1-\hat y_i} + \lambda = 0
$$
$$
\frac{\partial L}{\partial \lambda} = \sum_{i=1}^{I} \hat y_i - c = 0
$$
From the first equation, we can solve for $\hat y_i = 1 - \frac{1}{\lambda}$. Substituting this into the second equation, we get:
$$
I\left(1-\frac{1}{\lambda}\right) = c
$$
Solving for above, we get $\frac{1}{\lambda} = 1-\frac{c}{I}$. Therefore, $y_i = \frac{c}{I}$.
Substituting the value of $y_i$ into the original expression, we get:

$$
-\sum_{i=1}^{I} \log(1-y_i) = -I\log\left(1-\frac{c}{I}\right)
$$

Therefore, when $y_i = \frac{c}{I}$, $-\sum_{i=1}^{I} \log(1-y_i)$ takes the minimum value of $-I\log\left(1-\frac{c}{I}\right)$.

\end{proof}

\bibliographystyle{elsarticle-num}
\bibliography{mybib}

\begin{thebibliography}{10}
\expandafter\ifx\csname url\endcsname\relax
  \def\url#1{\texttt{#1}}\fi
\expandafter\ifx\csname urlprefix\endcsname\relax\def\urlprefix{URL }\fi
\expandafter\ifx\csname href\endcsname\relax
  \def\href#1#2{#2} \def\path#1{#1}\fi

\bibitem{grumiaux2022survey}
P.-A. Grumiaux, S.~Kiti{\'c}, L.~Girin, A.~Gu{\'e}rin, A survey of sound source localization with deep learning methods, The Journal of the Acoustical Society of America 152~(1) (2022) 107--151.

\bibitem{shimada2021accdoa}
K.~Shimada, Y.~Koyama, N.~Takahashi, S.~Takahashi, Y.~Mitsufuji, Accdoa: Activity-coupled cartesian direction of arrival representation for sound event localization and detection, in: ICASSP 2021-2021 IEEE International Conference on Acoustics, Speech and Signal Processing (ICASSP), IEEE, 2021, pp. 915--919.

\bibitem{shimada2022multi}
K.~Shimada, Y.~Koyama, S.~Takahashi, N.~Takahashi, E.~Tsunoo, Y.~Mitsufuji, Multi-accdoa: Localizing and detecting overlapping sounds from the same class with auxiliary duplicating permutation invariant training, in: ICASSP 2022-2022 IEEE International Conference on Acoustics, Speech and Signal Processing (ICASSP), IEEE, 2022, pp. 316--320.

\bibitem{bai20233d}
J.~Bai, S.~Huang, H.~Yin, Y.~Jia, M.~Wang, J.~Chen, 3d audio signal processing systems for speech enhancement and sound localization and detection, in: ICASSP 2023-2023 IEEE International Conference on Acoustics, Speech and Signal Processing (ICASSP), IEEE, 2023, pp. 1--2.

\bibitem{wang2022localization}
Z.-Q. Wang, D.~Wang, Localization based sequential grouping for continuous speech separation, in: ICASSP 2022-2022 IEEE International Conference on Acoustics, Speech and Signal Processing (ICASSP), IEEE, 2022, pp. 281--285.

\bibitem{taherian2022lbt}
H.~Taherian, K.~Tan, D.~Wang, Multi-channel talker-independent speaker separation through location-based training, IEEE/ACM Transactions on Audio, Speech, and Language Processing 30 (2022) 2791--2800.

\bibitem{subramanian2021directional}
A.~S. Subramanian, C.~Weng, S.~Watanabe, M.~Yu, Y.~Xu, S.-X. Zhang, D.~Yu, Directional asr: A new paradigm for e2e multi-speaker speech recognition with source localization, in: ICASSP 2021-2021 IEEE International Conference on Acoustics, Speech and Signal Processing (ICASSP), IEEE, 2021, pp. 8433--8437.

\bibitem{subramanian2022deep}
A.~S. Subramanian, C.~Weng, S.~Watanabe, M.~Yu, D.~Yu, Deep learning based multi-source localization with source splitting and its effectiveness in multi-talker speech recognition, Computer Speech \& Language 75 (2022) 101360.

\bibitem{zheng2021real}
S.~Zheng, W.~Huang, X.~Wang, H.~Suo, J.~Feng, Z.~Yan, A real-time speaker diarization system based on spatial spectrum, in: ICASSP 2021-2021 IEEE International Conference on Acoustics, Speech and Signal Processing (ICASSP), IEEE, 2021, pp. 7208--7212.

\bibitem{taherian2023multi}
H.~Taherian, D.~Wang, Multi-channel conversational speaker separation via neural diarization, arXiv preprint arXiv:2311.08630 (2023).

\bibitem{knapp1976generalized}
C.~Knapp, G.~Carter, The generalized correlation method for estimation of time delay, IEEE transactions on acoustics, speech, and signal processing 24~(4) (1976) 320--327.

\bibitem{schmidt1986multiple}
R.~Schmidt, Multiple emitter location and signal parameter estimation, IEEE transactions on antennas and propagation 34~(3) (1986) 276--280.

\bibitem{dibiase2000high}
J.~H. DiBiase, A high-accuracy, low-latency technique for talker localization in reverberant environments using microphone arrays, Brown University, 2000.

\bibitem{chakrabarty2019multi}
S.~Chakrabarty, E.~A. Habets, Multi-speaker doa estimation using deep convolutional networks trained with noise signals, IEEE Journal of Selected Topics in Signal Processing 13~(1) (2019) 8--21.

\bibitem{vesperini2016neural}
F.~Vesperini, P.~Vecchiotti, E.~Principi, S.~Squartini, F.~Piazza, A neural network based algorithm for speaker localization in a multi-room environment, in: 2016 IEEE 26th International Workshop on Machine Learning for Signal Processing (MLSP), IEEE, 2016, pp. 1--6.

\bibitem{vecchiotti2019detection}
P.~Vecchiotti, G.~Pepe, E.~Principi, S.~Squartini, Detection of activity and position of speakers by using deep neural networks and acoustic data augmentation, Expert Systems with Applications 134 (2019) 53--65.

\bibitem{vera2018towards}
J.~M. Vera-Diaz, D.~Pizarro, J.~Macias-Guarasa, Towards end-to-end acoustic localization using deep learning: From audio signals to source position coordinates, Sensors 18~(10) (2018) 3418.

\bibitem{adavanne2018sound}
S.~Adavanne, A.~Politis, J.~Nikunen, T.~Virtanen, Sound event localization and detection of overlapping sources using convolutional recurrent neural networks, IEEE Journal of Selected Topics in Signal Processing 13~(1) (2018) 34--48.

\bibitem{diaz2020robust}
D.~Diaz-Guerra, A.~Miguel, J.~R. Beltran, Robust sound source tracking using srp-phat and 3d convolutional neural networks, IEEE/ACM Transactions on Audio, Speech, and Language Processing 29 (2020) 300--311.

\bibitem{diaz2022direction}
D.~Diaz-Guerra, A.~Miguel, J.~R. Beltran, Direction of arrival estimation of sound sources using icosahedral cnns, IEEE/ACM Transactions on Audio, Speech, and Language Processing 31 (2022) 313--321.

\bibitem{xiao2015learning}
X.~Xiao, S.~Zhao, X.~Zhong, D.~L. Jones, E.~S. Chng, H.~Li, A learning-based approach to direction of arrival estimation in noisy and reverberant environments, in: 2015 IEEE International Conference on Acoustics, Speech and Signal Processing (ICASSP), IEEE, 2015, pp. 2814--2818.

\bibitem{he2019adaptation}
W.~He, P.~Motlicek, J.-M. Odobez, Adaptation of multiple sound source localization neural networks with weak supervision and domain-adversarial training, in: ICASSP 2019-2019 IEEE International Conference on Acoustics, Speech and Signal Processing (ICASSP), IEEE, 2019, pp. 770--774.

\bibitem{nguyen2020robust}
T.~N.~T. Nguyen, W.-S. Gan, R.~Ranjan, D.~L. Jones, Robust source counting and doa estimation using spatial pseudo-spectrum and convolutional neural network, IEEE/ACM Transactions on Audio, Speech, and Language Processing 28 (2020) 2626--2637.

\bibitem{fu2022iterative}
Y.~Fu, M.~Ge, H.~Yin, X.~Qian, L.~Wang, G.~Zhang, J.~Dang, {Iterative Sound Source Localization for Unknown Number of Sources}, in: Proc. Interspeech 2022, 2022, pp. 896--900.
\newblock \href {https://doi.org/10.21437/Interspeech.2022-10525} {\path{doi:10.21437/Interspeech.2022-10525}}.

\bibitem{tang19_interspeech}
Z.~Tang, J.~D. Kanu, K.~Hogan, D.~Manocha, {Regression and Classification for Direction-of-Arrival Estimation with Convolutional Recurrent Neural Networks}, in: Proc. Interspeech 2019, 2019, pp. 654--658.
\newblock \href {https://doi.org/10.21437/Interspeech.2019-1111} {\path{doi:10.21437/Interspeech.2019-1111}}.

\bibitem{perotin2019regression}
L.~Perotin, A.~D{\'e}fossez, E.~Vincent, R.~Serizel, A.~Gu{\'e}rin, Regression versus classification for neural network based audio source localization, in: 2019 IEEE Workshop on Applications of Signal Processing to Audio and Acoustics (WASPAA), IEEE, 2019, pp. 343--347.

\bibitem{feng2023soft}
L.~Feng, Y.~Gong, X.-L. Zhang, Soft label coding for end-to-end sound source localization with ad-hoc microphone arrays, in: ICASSP 2023-2023 IEEE International Conference on Acoustics, Speech and Signal Processing (ICASSP), IEEE, 2023, pp. 1--5.

\bibitem{he2018deep}
W.~He, P.~Motlicek, J.-M. Odobez, Deep neural networks for multiple speaker detection and localization, in: 2018 IEEE International Conference on Robotics and Automation (ICRA), IEEE, 2018, pp. 74--79.

\bibitem{rubner2000earth}
Y.~Rubner, C.~Tomasi, L.~J. Guibas, The earth mover's distance as a metric for image retrieval, International journal of computer vision 40~(2) (2000) 99.

\bibitem{panayotov2015librispeech}
V.~Panayotov, G.~Chen, D.~Povey, S.~Khudanpur, Librispeech: an asr corpus based on public domain audio books, in: 2015 IEEE international conference on acoustics, speech and signal processing (ICASSP), IEEE, 2015, pp. 5206--5210.

\bibitem{scheibler2018pyroomacoustics}
R.~Scheibler, E.~Bezzam, I.~Dokmani{\'c}, Pyroomacoustics: A python package for audio room simulation and array processing algorithms, in: 2018 IEEE international conference on acoustics, speech and signal processing (ICASSP), IEEE, 2018, pp. 351--355.

\bibitem{tan2021speech}
X.~Tan, X.-L. Zhang, Speech enhancement aided end-to-end multi-task learning for voice activity detection, in: ICASSP 2021-2021 IEEE International Conference on Acoustics, Speech and Signal Processing (ICASSP), IEEE, 2021, pp. 6823--6827.

\bibitem{liu2022deep}
S.~Liu, Y.~Gong, X.-L. Zhang, Deep learning based two-dimensional speaker localization with large ad-hoc microphone arrays, arXiv preprint arXiv:2210.10265 (2022).

\bibitem{levina2001earth}
E.~Levina, P.~Bickel, The earth mover's distance is the mallows distance: Some insights from statistics, in: Proceedings Eighth IEEE International Conference on Computer Vision. ICCV 2001, Vol.~2, IEEE, 2001, pp. 251--256.

\bibitem{he2021neural}
W.~He, P.~Motlicek, J.-M. Odobez, Neural network adaptation and data augmentation for multi-speaker direction-of-arrival estimation, IEEE/ACM Transactions on Audio, Speech, and Language Processing 29 (2021) 1303--1317.

\end{thebibliography}

\end{document}